\newtheorem{theorem}{Theorem}
\newtheorem{corollary}[theorem]{Corollary}
\newtheorem{definition}[theorem]{Definition}
\newtheorem{lemma}[theorem]{Lemma}
\newtheorem{proposition}[theorem]{Proposition}
\begin{document}
\title{Imperfect cloning operations \\ in algebraic quantum theory}
\date{}
\author{Yuichiro Kitajima}
\maketitle

\begin{abstract}
No-cloning theorem says that there is no unitary operation that makes perfect clones of non-orthogonal quantum states. The objective of the present paper is to examine whether an imperfect cloning operation exists or not in a C*-algebraic framework. We define a universal $\epsilon$-imperfect cloning operation which tolerates a finite loss $\epsilon$ of fidelity in the cloned state, and show that an individual system's algebra of observables is abelian if and only if there is a universal $\epsilon$-imperfect cloning operation in the case where the loss of fidelity is less than $1/4$. Therefore in this case no universal $\epsilon$-imperfect cloning operation is possible in algebraic quantum theory.

Keywords: No-cloning theorem; Fidelity; Completely positive map; Algebraic quantum field theory
\end{abstract}

\section{Introduction}
Dieks \cite{dieks1982communication} and Wootters and Zurek \cite{wootters1982single} showed that there is no unitary operation that makes clones of non-orthogonal quantum states (cf. \cite[p.532]{nielsen2010quantum}). 
It is called no-cloning theorem, and this property is one of the fundamental differences between classical and quantum information.
Clifton, Bub and Halvorson \cite{clifton2003characterizing} generalized the notion of cloning to C*-algebraic states, and showed that an individual system's algebra of observables is abelian if and only if there is a universal cloning operation. 

No-cloning theorem applied only to perfect cloning. If we allow imperfect cloning operations which are `good' according to fidelity, do they exist in quantum mechanics? Bu{\v{z}}ek and Hillery \cite{buvzek1996quantum} showed that there exists a universal imperfect cloning operation which tolerates a finite loss of fidelity in the cloned state. The essential point of this operation is that the original and the cloned states are entangled. The objective of the present paper is to examine the case where the original and the cloned states are not entangled and a finite loss of fidelity in the cloned state is tolerated, and it is shown that such a universal imperfect cloning operation does not exist in algebraic quantum theory.

This paper is organized as follows. We begin, in section \ref{aqft}, by laying out a C*-algebraic framework. After introducing the framework, we examine the relations between a fidelity and a transition probability in section \ref{fidelities}, which play an important role in no-cloning theorem. 
In section \ref{imperfect_cloning} we define a universal $\epsilon$-imperfect cloning operation, which tolerates a finite loss $\epsilon$ of fidelity in the cloned state (Definition \ref{imperfect_definition}). In Theorem \ref{imperfect_theorem} it is shown that any individual system's algebra of observables is abelian if there is a universal $\epsilon$-imperfect cloning operation in the case where the loss of fidelity is less than $1/4$. On the other hand, such an algebra is not abelian in algebraic quantum theory. Therefore in this case no universal $\epsilon$-imperfect cloning operation is possible in algebraic quantum theory.

\section{Preliminary}
\label{aqft}

In this section, we shall introduce a C*-algebraic framework in order to apply it to algebraic quantum field theory. Algebraic quantum field theory exists in two versions: the Haag-Araki theory which uses von Neumann algebras on a Hilbert space, and the Haag-Kastler theory which uses abstract C*-algebras (cf. \cite{horuzhy1990introduction}). Here we examine no-cloning theorem in the Haag-Kastler theory.

In this theory, each bounded open region $\mathcal{O}$ in the Minkowski space is associated with a unital C*-algebra $\mathfrak{A}(\mathcal{O})$. Such a C*-algebra is called a local algebra. The set theoretic union of all $\mathfrak{A}(\mathcal{O})$ is a normed *-algebra. Taking its completion we get a C*-algebra $\mathfrak{A}_0$. Thus each local algebra is contained in $\mathfrak{A}_0$.
The following assumptions are made in the Haag-Kastler theory (\cite{haag1964algebraic}, cf. \cite{horuzhy1990introduction}).
\begin{description}
\item[Microcausality] If $\mathcal{O}_1$ and $\mathcal{O}_2$ are space-like separated, then $X_1X_2-X_2X_1=0$ for any $X_1 \in \mathfrak{A}(\mathcal{O}_1)$ and $X_2 \in \mathfrak{A}(\mathcal{O}_2)$. 
\item[C*-independence] If $\mathcal{O}_1$ and $\mathcal{O}_2$ are space-like separated, for any states $\psi_1$ of $\mathfrak{A}(\mathcal{O}_1)$ and $\psi_2$ of $\mathfrak{A}(\mathcal{O}_2)$, there exists a state $\psi$ of $\mathfrak{A}_0$ such that $\psi|_{\mathfrak{A}(\mathcal{O}_1)}=\psi_1$ and $\psi|_{\mathfrak{A}(\mathcal{O}_2)}=\psi_2$.
\item[Relativistic covariance] Let $g=(\Lambda, a)$ denote a Poincar\'e transformation $x \in M \rightarrow \Lambda x + a \in M$, where $a \in M$ is the amount of space-time translation and $\Lambda$ is a Lorentz transformation. There exists a representation of the Poincar\'{e} transformation by the automorphisms $\alpha_{(a,\Lambda)}$ of $\mathfrak{A}_0$ such that
\[ \alpha_{(a, \Lambda)}(\mathfrak{A}(\mathcal{O}))=\mathfrak{A}(\Lambda^{-1}(\mathcal{O}-a)). \]
\end{description}


The following C*-algebraic framework can be applied to the Haag-Kastler theory (cf. \cite[Section 3.3]{clifton2003characterizing}).
Let $\mathfrak{A}_1$ and $\mathfrak{A}_2$ be unital C*-subalgebras of a unital C*-algebra $\mathfrak{A}$. Throughout this paper, we suppose that they satisfy the following conditions;
\begin{itemize}
\item $A_1A_2 - A_2A_1=0$ for any $A_1 \in \mathfrak{A}_1$ and $A_2 \in \mathfrak{A}_2$;
\item For any states $\psi_1$ of $\mathfrak{A}_1$ and $\psi_2$ of $\mathfrak{A}_2$, there exists a state $\psi$ of $\mathfrak{A}$ such that $\psi|_{\mathfrak{A}_1}=\psi_1$ and $\psi|_{\mathfrak{A}_2}=\psi_2$;
\item There is a *-isomorphism $\alpha$ of $\mathfrak{A}_1$ onto $\mathfrak{A}_2$. We say that states $\psi_1$ of $\mathfrak{A}_1$ and $\psi_2$ of $\mathfrak{A}_2$ are isomorphic when $\psi_1(A_1)=\psi_2(\alpha(A_1))$ for any $A_1 \in \mathfrak{A}_1$. 
\end{itemize}

Let define a map $\eta$ from the *-algebra generated by $\mathfrak{A}_1$ and $\mathfrak{A}_2$ onto the algebraic tensor product $\mathfrak{A}_1 \odot \mathfrak{A}_2$ by $\eta(A_1A_2)=A_1 \otimes A_2$ for all $A_1 \in \mathfrak{A}_1$ and $A_2 \in \mathfrak{A}_2$ and let $\mathfrak{A}_1 \otimes \mathfrak{A}_2$ be the completion of $\mathfrak{A}_1 \odot \mathfrak{A}_2$ under the injective C*-cross norm $\| \cdot \|_{\text{min}}$. This norm is given by
\[ \| A \|_{\text{min}}=\sup \| (\pi_1 \otimes \pi_2)(A) \| \]
for any $A \in \mathfrak{A}_1 \odot \mathfrak{A}_2$, where the supremum is taken over all representations of $\mathfrak{A}_1$ and $\mathfrak{A}_2$, respectively. 

Then $\eta$ is an algebraic isomorphism with respect to the injective C*-cross norm on $\mathfrak{A}_1 \odot \mathfrak{A}_2$, and can be extended to a continuous homomorphism $\bar{\eta}$ from $\mathfrak{A}_1 \vee \mathfrak{A}_2$ onto $\mathfrak{A}_1 \otimes \mathfrak{A}_2$, where $\mathfrak{A}_1 \vee \mathfrak{A}_2$ is a C*-algebra generated by $\mathfrak{A}_1$ and $\mathfrak{A}_2$ (\cite{roos1970independence} and \cite[Theorem IV.4.9]{takesaki2002theory}, cf. \cite[Theorem 1.3.25]{horuzhy1990introduction}). So we use $A_1 \otimes A_2$ to denote the product of $A_1 \in \mathfrak{A}_1$ and $A_2 \in \mathfrak{A}_2$.

For any state $\omega$ of $\mathfrak{A}_1 \otimes \mathfrak{A}_2$, let define the state $\bar{\eta}^*\omega$ of $\mathfrak{A}_1 \vee \mathfrak{A}_2$ by setting
\[ (\bar{\eta}^* \omega)(A)=\omega(\bar{\eta}(A)) \]
for any $A \in \mathfrak{A}_1 \vee \mathfrak{A}_2$. If $\psi_1$ and $\phi_2$ are states of  $\mathfrak{A}_1$ and $\mathfrak{A}_2$, respectively, $\bar{\eta}^*(\psi_1 \otimes \phi_2)$ is a product state of $\mathfrak{A}_1 \vee \mathfrak{A}_2$ with marginal states $\psi_1$ and $\phi_2$. Clifton, Bub and Halvorson \cite[Lemma 1]{clifton2003characterizing} showed that $\bar{\eta}^*(\psi_1 \otimes \phi_2)$ is the unique product state of $\mathfrak{A}_1 \vee \mathfrak{A}_2$ with these marginal states. When it will not cause confusion, we will use $\psi_1 \otimes \phi_2$ to denote the unique product state of $\mathfrak{A}_1 \vee \mathfrak{A}_2$ with marginals $\psi_1$ and $\phi_2$.

For representations $\pi_1$ of $\mathfrak{A}_1$ on a Hilbert space $\mathcal{H}_1$ and $\pi_2$ of $\mathfrak{A}_2$ on a Hilbert space $\mathcal{H}_2$, the representation $\pi_1 \otimes \pi_2$ of $\mathfrak{A}_1 \odot \mathfrak{A}_2$ on $\mathcal{H}_1 \otimes \mathcal{H}_2$ is uniquely extended to a representation of $\mathfrak{A}_1 \otimes \mathfrak{A}_2$, which is denoted again by $\pi_1 \otimes \pi_2$ \cite[p.208]{takesaki2002theory}. Thus $(\pi_1 \otimes \pi_2) \circ \bar{\eta}$ is a representation of $\mathfrak{A}_1 \vee \mathfrak{A}_2$. We will omit reference to $\bar{\eta}$ and use $\pi_1 \otimes \pi_2$ to denote $(\pi_1 \otimes \pi_2) \circ \bar{\eta}$ when it will not cause confusion.

This C*-algebraic framework can also be applied to a case of a finite-dimensional Hilbert space. Let $\mathbb{B}(\mathcal{H}_n)$ be the set of all operators on a finite-dimensional Hilbert space $\mathcal{H}_n$, and let $I$ be an identity operator on $\mathcal{H}_n$. Then $\mathbb{B}(\mathcal{H}_n) \otimes I$ and $I \otimes \mathbb{B}(\mathcal{H}_n)$ are mutually commuting and C*-independent. 
Let $\alpha$ be a mapping from $\mathbb{B}(\mathcal{H}_n) \otimes I$ to $I \otimes \mathbb{B}(\mathcal{H}_n)$ such that $\alpha(A \otimes I)=I \otimes A$ for any $A \in \mathbb{B}(\mathcal{H}_n)$. This is an isomorphism from $\mathbb{B}(\mathcal{H}_n) \otimes I$ onto $I \otimes \mathbb{B}(\mathcal{H}_n)$. Therefore we can apply this C*-algebraic framework to such a case.

\section{Fidelity}
\label{fidelities}
In this section, we examine a fidelity, which plays an important role in no-broadcasting theorem. The fidelity $F(\psi, \phi)$ is defined as follows (\cite[Definition 1.2]{bures1969extension}, cf. \cite[Section 9.2.2]{nielsen2010quantum}).

\begin{definition}
Suppose that  $\pi$ is a representation of $\mathfrak{A}$ on a Hilbert space $\mathcal{H}$. For each state $\psi$ of $\mathfrak{A}$ define by
\[ S(\pi, \psi)=\{ x \in \mathcal{H} | \psi(A) = \langle x, \pi(A)x \rangle \ \text{for all} \ A \in \mathfrak{A} \}. \]

Let $\psi$ and $\phi$ be states of $\mathfrak{A}$, let $\Pi(\mathfrak{A})$ be the set of all representations of $\mathfrak{A}$, and let $\pi$ be in $\Pi(\mathfrak{A})$. If either $S(\pi, \psi)$ or $S(\pi, \phi)$ is empty, define 
$D_{\pi}(\psi, \phi)=\sqrt{2}$ and 
$F_{\pi}(\psi, \phi)=0$; otherwise define
\[ D_{\pi}(\psi, \phi)= \inf \{ \| x-y \| | x \in S(\pi, \psi), y \in S(\pi, \phi) \}, \]
\[ F_{\pi}(\psi, \phi)= \sup \{ |\langle x, y \rangle | | x \in S(\pi, \psi), y \in S(\pi, \phi) \}. \]
Moreover we define as follows.
\[ D(\psi,\phi)=\inf \{ D_{\pi}(\psi, \phi) | \pi \in \Pi(\mathfrak{A}) \}, \]
\[ F(\psi,\phi)=\sup \{F_{\pi}(\psi, \phi) | \pi \in \Pi(\mathfrak{A}) \}. \]
\end{definition}

$D(\psi,\phi)$ can be written by $F(\psi, \phi)$ \cite[Lemma 1.4]{bures1969extension} and satisfies the triangle inequality \cite[Proposition 1.7]{bures1969extension};
\begin{equation}
\label{fidelity_distance}
D(\psi,\phi)^2=2-2F(\psi,\phi);
\end{equation}
\begin{equation}
\label{triangle}
D(\psi,\phi) \leq D(\psi, \omega) + D(\omega, \phi)
\end{equation}
for any states $\psi$, $\phi$ and $\omega$ of $\mathfrak{A}$.

Uhlmann \cite[Section 5]{uhlmann1976transition} showed that $F(\psi, \phi)=\text{tr}(D_{\psi}^{1/2}D_{\phi}D_{\psi}^{1/2})^{1/2}$ when $\mathfrak{A}$ is the algebra $\mathbb{B}(\mathcal{H})$ of all bounded operators on a Hilbert space $\mathcal{H}$, and $\psi$ and $\phi$ are states of $\mathbb{B}(\mathcal{H})$ which are written by density operators $D_{\psi}$ and $D_{\phi}$, respectively.
$\text{tr}(D_{\psi}^{1/2}D_{\phi}D_{\psi}^{1/2})^{1/2}$ is called a fidelity, which is a quantitative measure of similarity between two states (cf. \cite[Section 9.2.2]{nielsen2010quantum}). 

There are many works about properties of a fidelity. For example, Araki and Raggio 
\cite{araki1982remark,raggio1982comparison} examine them in a von Neumann algebraic framework, using the theory of von Neumann algebra in standard form, and Alberti and Uhlmann \cite{alberti1983note,alberti1983stochastic,uhlmann1976transition,uhlmann1985transition} examine them in a C*-algebraic framework. In this section, we state its properties which are needed for the proof of the main theorem.

Let $\psi_i$ and $\phi_i$ be states of $\mathfrak{A}_i$ for $i=1,2$. Suppose that $\psi_2$ and $\phi_2$ are isomorphic and $\psi_1$ and $\phi_1$, respectively, that is, there is a *-isomorphism $\alpha$ of $\mathfrak{A}_1$ onto $\mathfrak{A}_2$ such that $\psi_1(A_1)=\psi_2(\alpha(A_1))$ and $\phi_1(A_1)=\phi_2(\alpha(A_1))$ for any $A_1 \in \mathfrak{A}_1$. If $x_1 \in S(\pi, \psi_1)$ and $y_1 \in S(\pi, \phi_1)$, then $x_1 \in S(\pi \circ \alpha^{-1}, \psi_2)$ and $y_1 \in S(\pi \circ \alpha^{-1}, \phi_2)$. Thus $| \langle x_1, y_1 \rangle | \leq F(\psi_2, \phi_2)$. Taking the supremum over $x_1 \in S(\pi, \psi_1)$ and $y_1 \in S(\pi, \phi_1)$, $F(\psi_1,\phi_1) \leq F(\psi_2, \phi_2)$. Similarly $F(\psi_1,\phi_1) \geq F(\psi_2, \phi_2)$. Therefore
\begin{equation}
\label{e0.01}
F(\psi_1,\phi_1)=F(\phi_2, \phi_2).
\end{equation}

Although Bures \cite[Proposition 1.6]{bures1969extension} showed the following proposition in the case of W*-algebras, it also holds in the case of C*-algebras.

\begin{proposition}
\label{Bures_representation}
There exists a representation $\pi$ of $\mathfrak{A}$ on a Hilbert space $\mathcal{H}$ such that 
$F(\psi, \phi)=F_{\pi}(\psi, \phi)$ for any states $\psi$ and  $\phi$ of $\mathfrak{A}$.
\end{proposition}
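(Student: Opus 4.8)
The plan is to reduce the C*-algebraic statement to the W*-algebraic case already settled by Bures, by passing to the universal enveloping von Neumann algebra $\mathfrak{A}^{**}$, realized concretely through the universal representation $\pi_u$ of $\mathfrak{A}$ on $\mathcal{H}_u$, for which $\pi_u(\mathfrak{A})'' \cong \mathfrak{A}^{**}$. The organizing fact is the universal property of $\mathfrak{A}^{**}$: every representation $\pi$ of $\mathfrak{A}$ extends uniquely to a normal representation $\tilde{\pi}$ of $\mathfrak{A}^{**}$, and conversely every normal representation of $\mathfrak{A}^{**}$ restricts to a representation of $\mathfrak{A}$; moreover each state $\psi$ of $\mathfrak{A}$ extends uniquely to a normal state $\bar{\psi}$ of $\mathfrak{A}^{**}$. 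I would use this dictionary to transport the supremum defining $F(\psi,\phi)$ from representations of $\mathfrak{A}$ to normal representations of $\mathfrak{A}^{**}$.

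The key lemma I would isolate first is the identity of vector sets
\[ S(\pi,\psi) = S(\tilde{\pi},\bar{\psi}) \]
for every representation $\pi$ of $\mathfrak{A}$ and every state $\psi$. The inclusion $\supseteq$ is immediate since $\mathfrak{A} \subseteq \mathfrak{A}^{**}$ while $\bar{\psi}$ and $\tilde{\pi}$ extend $\psi$ and $\pi$. For $\subseteq$, given $x$ with $\langle x, \pi(A)x \rangle = \psi(A)$ for all $A \in \mathfrak{A}$, both $M \mapsto \langle x, \tilde{\pi}(M)x \rangle$ and $M \mapsto \bar{\psi}(M)$ are normal functionals on $\mathfrak{A}^{**}$ that agree on the $\sigma$-weakly dense subalgebra $\mathfrak{A}$, hence they coincide, so $x \in S(\tilde{\pi}, \bar{\psi})$. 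Because the vector sets coincide and the inner products are computed identically, $F_{\pi}(\psi,\phi) = F_{\tilde{\pi}}(\bar{\psi},\bar{\phi})$. Taking suprema along the bijective correspondence of representations, $F(\psi,\phi)$ equals the Bures fidelity of the normal states $\bar{\psi},\bar{\phi}$ computed over normal representations of the W*-algebra $\mathfrak{A}^{**}$.

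With this reduction in hand I would invoke Bures' result (Proposition 1.6 of \cite{bures1969extension}) for the W*-algebra $\mathfrak{A}^{**}$: there is a single normal representation $\pi_0$ of $\mathfrak{A}^{**}$, its standard form, in which $F_{\pi_0}(\bar{\psi},\bar{\phi})$ equals the fidelity of $\bar{\psi},\bar{\phi}$ for all normal states simultaneously, each normal state being a vector state in $\pi_0$ so that the relevant vector sets are nonempty. Restricting $\pi_0$ to $\mathfrak{A}$ gives a representation $\pi$ of $\mathfrak{A}$, and by the vector-set identity read in the direction $S(\pi,\psi) = S(\pi_0,\bar{\psi})$, using that $\pi_0$ is the normal extension of $\pi$ by uniqueness, I obtain $F_{\pi}(\psi,\phi) = F_{\pi_0}(\bar{\psi},\bar{\phi}) = F(\psi,\phi)$ for all states $\psi,\phi$ of $\mathfrak{A}$, as required.

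The main obstacle is the key lemma, that is, showing that passing between $\mathfrak{A}$ and $\mathfrak{A}^{**}$ leaves the vector sets $S(\cdot,\cdot)$, and hence the quantities $F_{\pi}$, unchanged; this rests on the normality of $\tilde{\pi}$ and of $\bar{\psi}$ together with the $\sigma$-weak density of $\mathfrak{A}$ in $\mathfrak{A}^{**}$, and on verifying that $\pi \leftrightarrow \tilde{\pi}$ is an honest bijection between representations of $\mathfrak{A}$ and normal representations of $\mathfrak{A}^{**}$, so that the two suprema genuinely agree. Once these standard enveloping-algebra facts are in place, the C*-statement follows from the W*-statement with no further analysis, which is exactly the sense in which Bures' proposition also holds in the C*-case.
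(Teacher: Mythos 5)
The paper offers no proof of this proposition at all: it merely cites Bures' Proposition 1.6 for W*-algebras and asserts that the statement ``also holds in the case of C*-algebras.'' Your proposal supplies exactly the missing reduction, and it is correct. The key lemma $S(\pi,\psi)=S(\tilde{\pi},\bar{\psi})$ is sound: the inclusion $\supseteq$ is trivial, and $\subseteq$ follows because two normal functionals on $\mathfrak{A}^{**}\cong\pi_u(\mathfrak{A})''$ that agree on the $\sigma$-weakly dense copy of $\mathfrak{A}$ coincide; note also that the lemma handles the degenerate case, since $S(\pi,\psi)=\emptyset$ iff $S(\tilde{\pi},\bar{\psi})=\emptyset$, so $F_\pi(\psi,\phi)=F_{\tilde{\pi}}(\bar{\psi},\bar{\phi})$ in all cases. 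The identification of the two suprema is legitimate because Bures' W*-fidelity is itself defined as a supremum over \emph{normal} representations evaluated at \emph{normal} states, which is precisely what the correspondences $\pi\leftrightarrow\tilde{\pi}$ (Kadison--Ringrose, Theorem 10.1.12; the inverse map is well defined since $\mathfrak{A}$ is unital, so restrictions are nondegenerate, and $\widetilde{\rho|_{\mathfrak{A}}}=\rho$ by uniqueness of normal extensions) and $\psi\leftrightarrow\bar{\psi}$ produce; and Bures' representation $\pi_0$ is one in which every normal state is a vector state, so the sets $S(\pi_0,\bar{\psi})$ are nonempty and the supremum is genuinely attained. What your route buys over the paper's bare citation is an actual verification that the C*-algebraic supremum over all representations computes the same quantity as Bures' W*-algebraic one, which is the only nontrivial content of the claimed extension.
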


\begin{definition}
We call the representation in Proposition \ref{Bures_representation} Bures representation of $\mathfrak{A}$.
\end{definition}


Next we introduce a transition probability. Roberts and Roepstorff \cite[Definition 4.7]{roberts1969some} proved the following proposition.

\begin{proposition}
\label{roberts}
Let $\mathfrak{B}$ be a C*-algebra on a Hilbert space $\mathcal{H}$ and let $\omega_x$ and $\omega_y$ be the pure states of $\mathfrak{B}$ induced by the unit vectors $x$ and $y$. Then
\[ | \langle x,y \rangle |^2=1-\frac{1}{4}\| \omega_x - \omega_y \|^2. \]
\end{proposition}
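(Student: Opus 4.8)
The plan is to prove the equivalent identity $\|\omega_x - \omega_y\| = 2\sqrt{1 - |\langle x, y\rangle|^2}$ and then square and rearrange. Since the vector state $\omega_y$ depends only on the ray of $y$, I would first multiply $y$ by a suitable unimodular scalar so that $c := \langle x, y\rangle$ is real and lies in $[0,1]$, and set $s := \sqrt{1 - c^2}$. The functional $\omega_x - \omega_y$ is Hermitian, because $\omega_x(A^*) = \overline{\omega_x(A)}$ for every $A$, so its norm is computed over self-adjoint contractions: $\|\omega_x - \omega_y\| = \sup\{ (\omega_x - \omega_y)(A) : A = A^*,\ \|A\| \le 1 \}$.

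For the upper bound I would compress everything to the subspace $V := \mathrm{span}\{x, y\}$, which has dimension at most two. Writing $P$ for the orthogonal projection onto $V$, one has $\langle x, Ax\rangle = \langle x, (PAP)x\rangle$ and likewise for $y$, and $PAP$ is a self-adjoint operator on $V$ with $\|PAP\| \le 1$. In the orthonormal basis of $V$ given by $x$ and a unit vector orthogonal to it, so that $x = e_1$ and $y = c\,e_1 + s\,e_2$, the expression $\langle x, Bx\rangle - \langle y, By\rangle$ is a fixed real-linear functional of the entries of the self-adjoint contraction $B = PAP$; its supremum over the self-adjoint operator-norm ball of $V$ equals the trace norm of the associated $2 \times 2$ Hermitian matrix, which a short computation evaluates to $2s$. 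Hence $\|\omega_x - \omega_y\| \le 2\sqrt{1 - c^2}$.

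The crux is the matching lower bound, and this is precisely where purity is essential: for a merely mixed vector state the compression bound is strict, as one already sees for $\mathbb{C} \oplus \mathbb{C}$ acting diagonally on $\mathbb{C}^2$. The $2 \times 2$ maximiser above is the reflection $B_0 = Q_+ - Q_-$ formed from the spectral projections of that Hermitian matrix, a self-adjoint contraction on $V$. I would realise $B_0$ inside $\mathfrak{B}$ by Kadison's transitivity theorem: since $\omega_x$ and $\omega_y$ are pure, the vectors $x$ and $y$ should lie in an irreducible representation of $\mathfrak{B}$, and the self-adjoint form of transitivity would then yield a self-adjoint $A \in \mathfrak{B}$ with $\|A\| = 1$ whose action on $x$ and $y$ agrees with $B_0$. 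Then $(\omega_x - \omega_y)(A) = \langle x, B_0 x\rangle - \langle y, B_0 y\rangle = 2s$, giving the reverse inequality and hence equality.

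I expect the realisation step to be the main obstacle. One must make sure the two unit vectors genuinely lie in a common irreducible representation space on which $\mathfrak{B}$ acts, so that the prescribed finite-dimensional operator $B_0$ is attainable as the restriction of an element of $\mathfrak{B}$ and not merely of $\mathbb{B}(\mathcal{H})$; verifying this hypothesis — and, should disjoint pure states be admitted, separating that case, where the cyclic subspaces are orthogonal, the vectors are forced to satisfy $\langle x, y\rangle = 0$, and the distance equals $2$ — is where the care is needed. Once $B_0$ is realised, squaring $\|\omega_x - \omega_y\| = 2s$ gives $\|\omega_x - \omega_y\|^2 = 4(1 - c^2)$, which is the asserted identity.
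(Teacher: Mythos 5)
The paper offers no proof of this proposition (it is quoted from Roberts--Roepstorff), but your plan is essentially the method the paper itself uses for the companion result, Lemma~\ref{inner_product}: the upper bound comes from the spectral decomposition of the rank-two operator $P-Q$ (equivalently its trace norm $2\sqrt{1-|\langle x,y\rangle|^2}$), and the lower bound from realising the optimal reflection inside the algebra by Kadison transitivity. The only cosmetic difference is that the paper invokes the unitary form of transitivity (a unitary $V$ with $\pi(V)z_1=z_1$ and $\pi(V)z_2=-z_2$, automatically of norm one, evaluated directly against the Hermitian functional), whereas you ask for the self-adjoint form; both versions are available in Kadison--Ringrose, Theorem 10.2.1, so either route closes the lower bound.

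Your worry about the ``realisation step'' is exactly the right one, and it points at an imprecision in the statement rather than a flaw in your argument. As literally stated, with $\mathfrak{B}$ acting possibly reducibly on $\mathcal{H}$, the identity is false: take $\mathfrak{B}=\mathbb{B}(\mathbb{C}^2)\otimes I$ on $\mathbb{C}^2\otimes\mathbb{C}^2$ with $x=e_1\otimes e_1$ and $y=e_1\otimes e_2$; then $\omega_x=\omega_y$ is pure and $\|\omega_x-\omega_y\|=0$, yet $\langle x,y\rangle=0$. The needed hypothesis, implicit in Roberts--Roepstorff and in every use the paper makes of the proposition (always inside the irreducible GNS space of a pure state, as in Lemma~\ref{inner_product} and Proposition~\ref{transition}), is that $y$ lies in the cyclic subspace $[\mathfrak{B}x]$ on which $\mathfrak{B}$ acts irreducibly; under that hypothesis transitivity applies and your proof goes through. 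One small correction to your case analysis: orthogonality of the cyclic subspaces alone does not force $\|\omega_x-\omega_y\|=2$ (the example above has orthogonal cyclic subspaces carrying \emph{equivalent} subrepresentations); the distance-$2$ conclusion requires the two GNS representations to be \emph{disjoint}, which is what makes the corresponding projections in the commutant orthogonal.
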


Based on this proposition they defined a transition probability $\psi \cdot \phi$ between pure states $\psi$ and $\phi$ of $\mathfrak{A}$ as follows \cite[Definition 4.7]{roberts1969some}.

\begin{definition}
Let $\psi$ and $\phi$ be pure states of $\mathfrak{A}$. Let define
\[ \psi \cdot \phi = 1-\frac{1}{4}\| \psi - \phi \|^2. \]
\end{definition}

In the following proposition, we examine the relation between $F(\psi, \phi)$ and $\psi \cdot \phi$ in a C*-algebraic framework. 

\begin{proposition}
\label{transition}
Let $\psi$ and $\phi$ be pure states of $\mathfrak{A}$. Then $\psi \cdot \phi = F(\psi,\phi)^2$.
\end{proposition}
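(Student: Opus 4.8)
The plan is to prove the two inequalities $\psi\cdot\phi \ge F(\psi,\phi)^2$ and $F(\psi,\phi)^2 \ge \psi\cdot\phi$ separately. In both directions the bridge between the representation-free quantity $\|\psi-\phi\|$ and the Hilbert-space overlaps $|\langle x,y\rangle|$ will be Proposition \ref{roberts} applied to the full algebra $\mathbb{B}(\mathcal{H})$, whose vector states are automatically pure, so that the hypothesis of that proposition is met for free.

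For the first (easy) inequality I would fix an arbitrary representation $\pi$ of $\mathfrak{A}$ on a Hilbert space $\mathcal{H}$ and unit vectors $x\in S(\pi,\psi)$ and $y\in S(\pi,\phi)$. Viewing $\omega_x,\omega_y$ as pure states of $\mathbb{B}(\mathcal{H})$, Proposition \ref{roberts} gives $|\langle x,y\rangle|^2 = 1-\tfrac14\|\omega_x-\omega_y\|^2$, the norm being taken on $\mathbb{B}(\mathcal{H})$. Since $\psi-\phi=(\omega_x-\omega_y)\circ\pi$ and $\pi$ is contractive, $\|\psi-\phi\|\le\|\omega_x-\omega_y\|$, whence $|\langle x,y\rangle|^2\le 1-\tfrac14\|\psi-\phi\|^2=\psi\cdot\phi$. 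Taking the supremum over all $\pi$, $x$ and $y$ yields $F(\psi,\phi)^2\le\psi\cdot\phi$. This direction uses nothing about $\psi,\phi$ beyond the definitions.

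For the reverse inequality I would exhibit a single representation in which the overlap actually attains $\sqrt{\psi\cdot\phi}$, so that it is dominated by the supremum defining $F(\psi,\phi)$. Here the pure-state dichotomy is convenient. If $\psi$ and $\phi$ are unitarily equivalent, they are both vector states, realized by unit vectors $x,y$, of a common irreducible representation $\pi$. Then $\pi(\mathfrak{A})$ acts irreducibly, so by the Kaplansky density theorem the unit ball of $\pi(\mathfrak{A})$ is $\sigma$-strongly dense in that of $\mathbb{B}(\mathcal{H})$; since $\omega_x-\omega_y$ is $\sigma$-weakly continuous, its norm on $\pi(\mathfrak{A})$ equals its norm on $\mathbb{B}(\mathcal{H})$, and the former is exactly $\|\psi-\phi\|$. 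Proposition \ref{roberts} then gives $|\langle x,y\rangle|^2 = 1-\tfrac14\|\omega_x-\omega_y\|^2 = 1-\tfrac14\|\psi-\phi\|^2 = \psi\cdot\phi$, and since $|\langle x,y\rangle|\le F(\psi,\phi)$ we obtain $F(\psi,\phi)^2\ge\psi\cdot\phi$. If instead $\psi$ and $\phi$ are disjoint, then $\|\psi-\phi\|=2$, so $\psi\cdot\phi=0$ and the inequality $F(\psi,\phi)^2\ge 0$ is automatic.

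I expect the main obstacle to be this reverse inequality, and within it the single nontrivial point is that restricting the functional $\omega_x-\omega_y$ from $\mathbb{B}(\mathcal{H})$ to $\pi(\mathfrak{A})$ does not decrease its norm; this is precisely where irreducibility and the Kaplansky density theorem, together with the normality of $\omega_x-\omega_y$, are needed. Everything else is routine bookkeeping: checking that $S(\pi,\psi)$ and $S(\pi,\phi)$ are nonempty, that $x,y$ may be taken to be unit vectors with $\langle x,y\rangle\ge 0$, and invoking the standard facts that inequivalent pure states are disjoint with $\|\psi-\phi\|=2$.
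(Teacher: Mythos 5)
Your proof is correct, and it reaches Proposition \ref{transition} by a genuinely different route from the paper's in the key technical step. The case split is the same (unitarily equivalent versus disjoint pure states), but where the paper establishes the norm identity $\|\psi-\phi\|=2(1-|\langle x_\psi,y\rangle|^2)^{1/2}$ from scratch as Lemma \ref{inner_product} --- a trace computation with the projections $P,Q$ plus Kadison's transitivity theorem to realize the optimizing unitary inside $\pi_\psi(\mathfrak{A})$ --- you obtain the same identity by applying Proposition \ref{roberts} to $\mathbb{B}(\mathcal{H})$ and transferring the norm of $\omega_x-\omega_y$ down to $\pi(\mathfrak{A})$ via irreducibility, the Kaplansky density theorem and normality; this correctly isolates the one nontrivial point (that restriction to $\pi(\mathfrak{A})$ does not decrease the norm), and could even be shortened by applying Proposition \ref{roberts} directly to $\mathfrak{B}=\pi(\mathfrak{A})$, whose vector states are pure by irreducibility. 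Your easy direction, $F(\psi,\phi)^2\le\psi\cdot\phi$ via contractivity of an arbitrary representation, is a clean self-contained substitute for the paper's citation of Uhlmann's inequality $\|\psi-\phi\|\le 2(1-F(\psi,\phi)^2)^{1/2}$, and it buys you the disjoint case for free: the paper there works harder, proving $F(\psi,\phi)=0$ outright through orthogonality of projections in the commutant of the Bures representation, whereas you rightly note that $\psi\cdot\phi=0$ together with the upper bound already forces $F(\psi,\phi)=0$. The only steps you leave implicit are routine: that $\pi$ carries the unit ball of $\mathfrak{A}$ onto a dense subset of the unit ball of $\pi(\mathfrak{A})$ (so the restricted norm equals $\|\psi-\phi\|$), and that two unitarily equivalent pure states are realized as vector states of a common irreducible representation.
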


In order to prove Proposition \ref{transition} and  Proposition \ref{nonabelian}, the following lemma is needed.

\begin{lemma}
\label{inner_product}
Let $\psi$ be a pure state of $\mathfrak{A}$, let $(\pi_{\psi}, \mathcal{H}_{\psi}, x_{\psi})$ be GNS representation induced by $\psi$, let $y$ be a unit vector in $\mathcal{H}_{\psi}$ and let $\phi$ be a state of $\mathfrak{A}$ such that $\phi(A)=\langle y, \pi(A)y \rangle$ for any $A \in \mathfrak{A}$. Then $\| \psi - \phi \| = 2(1-|\langle x_{\psi}, y \rangle |^2)^{1/2}$.
\end{lemma}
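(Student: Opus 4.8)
The plan is to exploit the purity of $\psi$: since $\psi$ is pure, its GNS representation $(\pi_\psi, \mathcal{H}_\psi, x_\psi)$ is irreducible, so that $\pi_\psi(\mathfrak{A})'' = \mathbb{B}(\mathcal{H}_\psi)$. Both $\psi$ and $\phi$ are then vector states of the \emph{same} irreducible representation, namely $\psi(A) = \langle x_\psi, \pi_\psi(A) x_\psi \rangle$ and $\phi(A) = \langle y, \pi_\psi(A) y \rangle$ for all $A \in \mathfrak{A}$, and the aim is to reduce the computation of $\| \psi - \phi \|$ to a finite-dimensional trace-norm computation.

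First I would rewrite the difference as a trace against a fixed operator. Setting $T = |x_\psi\rangle\langle x_\psi| - |y\rangle\langle y|$, a self-adjoint trace-class (indeed rank at most $2$) operator on $\mathcal{H}_\psi$, one has $\psi(A) - \phi(A) = \text{tr}(T\, \pi_\psi(A))$ for every $A \in \mathfrak{A}$, whence $\| \psi - \phi \| = \sup \{ | \text{tr}(T\, \pi_\psi(A)) | : A \in \mathfrak{A},\ \| A \| \leq 1 \}$. The next step, which I expect to be the main obstacle, is to identify this supremum with the trace norm $\| T \|_1$. Here irreducibility is genuinely used: by the Kaplansky density theorem the image under $\pi_\psi$ of the unit ball of $\mathfrak{A}$ is $\sigma$-weakly dense in the unit ball of $\pi_\psi(\mathfrak{A})'' = \mathbb{B}(\mathcal{H}_\psi)$, and since $T$ is trace-class the functional $B \mapsto \text{tr}(TB)$ is $\sigma$-weakly continuous. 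Therefore the supremum is unchanged when the domain is enlarged from $\pi_\psi(\text{unit ball of } \mathfrak{A})$ to the full unit ball of $\mathbb{B}(\mathcal{H}_\psi)$, and by the duality between $\mathbb{B}(\mathcal{H}_\psi)$ and the trace-class operators this enlarged supremum equals $\| T \|_1$. Without purity the supremum over $\pi_\psi(\mathfrak{A})$ could be strictly smaller than $\| T \|_1$, so this density/continuity juncture is exactly where the hypothesis enters.

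Finally I would compute $\| T \|_1$ explicitly. The operator $T$ is supported on $\text{span}\{ x_\psi, y \}$, so it suffices to diagonalize a $2 \times 2$ self-adjoint matrix. Writing $c = \langle x_\psi, y \rangle$ and choosing an orthonormal basis $\{ x_\psi, e_2 \}$ of this span with $y = c\, x_\psi + \sqrt{1 - |c|^2}\, e_2$, the matrix of $T$ has trace $0$ and determinant $-(1 - |c|^2)$, so its eigenvalues are $\pm \sqrt{1 - |c|^2}$. Hence $\| T \|_1 = 2\sqrt{1 - |c|^2} = 2(1 - |\langle x_\psi, y \rangle|^2)^{1/2}$, which is the claimed identity; the degenerate case $|c| = 1$, where $y$ is a phase multiple of $x_\psi$ and $T = 0$, is covered by the same formula. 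The remaining work is routine linear algebra, so the only conceptually delicate point is the passage from the abstract algebra to $\mathbb{B}(\mathcal{H}_\psi)$ in the preceding paragraph.
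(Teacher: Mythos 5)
Your proof is correct, and it differs from the paper's in the mechanism used for the crucial lower bound. Both arguments begin identically: write $\psi-\phi$ as $A\mapsto \mathrm{tr}(T\pi_{\psi}(A))$ with $T=P-Q$ of rank at most two, and both ultimately reduce to the eigenvalues $\pm(1-|\langle x_{\psi},y\rangle|^{2})^{1/2}$ of $T$, so the upper bound $\|\psi-\phi\|\le \|T\|_{1}$ is the same in each. The divergence is in showing that $\sup_{\|A\|\le 1}|\mathrm{tr}(T\pi_{\psi}(A))|$ actually reaches $\|T\|_{1}$. The paper writes $P-Q=a_{1}(R_{1}-R_{2})$ and invokes Kadison's transitivity theorem to produce an explicit unitary $V\in\mathfrak{A}$ with $\pi_{\psi}(V)z_{1}=z_{1}$ and $\pi_{\psi}(V)z_{2}=-z_{2}$, so the bound is attained exactly at an element of the algebra. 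You instead argue softly: irreducibility gives $\pi_{\psi}(\mathfrak{A})''=\mathbb{B}(\mathcal{H}_{\psi})$, Kaplansky density lets you replace the unit ball of $\pi_{\psi}(\mathfrak{A})$ by that of $\mathbb{B}(\mathcal{H}_{\psi})$, and the normality of $\mathrm{tr}(T\,\cdot\,)$ identifies the supremum with $\|T\|_{1}$ via the trace-class/bounded-operator duality. Both routes use purity in exactly the same essential way, through irreducibility of $\pi_{\psi}$; yours is less constructive (no explicit optimizer in $\mathfrak{A}$) but avoids the transitivity theorem in favour of standard density and duality facts, and you correctly identify the density/continuity step as the precise point where the hypothesis of purity is consumed.
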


\begin{proof}
Let $P$ and $Q$ be projections whose ranges are subspaces generated by $x_{\psi}$ and $y$, respectively. Then
\begin{equation}
\label{transition01}
\psi(A)-\phi(A)=\text{tr}((P-Q)\pi_{\psi}(A))
\end{equation}
for any $A \in \mathfrak{A}$. 

If $P=Q$, then $\| \psi - \phi \| = 0 =2(1-|\langle x_{\psi}, y \rangle |^2)^{1/2}$. Suppose that $P \neq Q$. Then there are orthogonal projections $R_1$ and $R_2$ and $a_1,a_2 \in \mathbb{R}$ such that 
\begin{equation}
\label{transition02}
P-Q=a_1R_1-a_2R_2.
\end{equation}
By taking the trace of Equation (\ref{transition02}), we obtain $a_1=a_2$. By squaring Equation (\ref{transition02}) and taking the trace 
\begin{equation}
\label{transition03}
2-2|\langle x_{\psi},y \rangle |^2=2a_1^2.
\end{equation}
Thus $| \psi(A)-\phi(A) |=a_1 \text{tr}((R_1-R_2)\pi(A)) \leq 2a_1 \leq 2(1-| \langle x_{\psi},y \rangle |^2)^{1/2}$ for any $A \in \mathfrak{A}$ such that $\| A \| \leq 1$ by Equations (\ref{transition01}), (\ref{transition02}) and (\ref{transition03}) (cf. \cite[Section 3]{uhlmann1976transition}). Therefore
\begin{equation}
\label{transition04}
\| \psi - \phi \| \leq 2(1-| \langle x_{\psi},y \rangle |^2)^{1/2}.
\end{equation}

Let $V^0=R_1-R_1^{\perp}$ and let $z_1$ and $z_2$ be unit vectors which are in the range of $R_1$ and $R_2$, respectively. Then $V^0z_1=z_1$ and $V^0z_2=-z_2$. Since $V^0$ is a unitary operator, there is a unitary element $V \in \mathfrak{A}$ such that $\pi_{\psi}(V)z_1=z_1$ and $\pi_{\psi}(V)z_2=-z_2$ \cite[Theorem 10.2.1]{kadison1986fundamentals}. By Equations (\ref{transition01}), (\ref{transition02}) and (\ref{transition03}),
\begin{equation}
\label{transition05}
 \begin{split}
&\| \psi - \phi \| \geq |\psi(V)-\phi(V)|=|\text{tr}((P-Q)\pi_{\psi}(V))|=|\text{tr}((a_1R_1-a_2R_2)\pi_{\psi}(V))| \\
&=2a_1=2(1-|\langle x_{\psi}, y \rangle |^2)^{1/2} .
\end{split} 
\end{equation}
Equations (\ref{transition04}) and (\ref{transition05}) imply $\| \psi - \phi \|=2(1-|\langle x_{\psi}, y \rangle |^2)^{1/2}$.
 \end{proof}

In the present paper, if $\mathbb{A}$ is a set of operators acting on a Hilbert space $\mathcal{H}$, let $\mathbb{A}'$ represent its commutant, the set of all bounded operators on $\mathcal{H}$ which commute with all elements of $\mathbb{A}$.

\begin{proof}[Proof of Proposition \ref{transition}]
Let $(\pi_{\psi},x_{\psi},\mathcal{H}_{\psi})$ and $(\pi_{\phi},x_{\phi},\mathcal{H}_{\phi})$ be GNS representations induced by $\psi$ and $\phi$, respectively.

\begin{enumerate}
\item
Suppose that $\pi_{\psi}$ and $\pi_{\phi}$ are not unitarily equivalent. Then $\| \psi - \phi \|=2$ by \cite[Corollary 10.3.8]{kadison1986fundamentals}. Thus $\psi \cdot \phi=0$.

Let $\pi$ be Bures representation of $\mathfrak{A}$ on a Hilbert space $\mathcal{H}$. Let $x \in S(\pi, \psi)$ and $y \in S(\pi, \phi)$, let $E' \in \pi(\mathfrak{A})'$ and $F' \in \pi(\mathfrak{A})'$ be projections whose range are subspaces generated by $\{ \pi(\mathfrak{A})x \}$ and $\{ \pi(\mathfrak{A})y \}$, respectively, and let $\pi_{E'}$ and $\pi_{F'}$ be representations on $E'\mathcal{H}$ and $F'\mathcal{H}$ such that $\pi_{E'}(A)=E'\pi(A)E'$ and $\pi_{F'}(A)=F'\pi(A)F'$ for any $A \in \mathfrak{A}$, respectively. Then $\pi_{E'}$ and $\pi_{F'}$ are unitarily equivalent to $\pi_{\psi}$ and $\pi_{\phi}$, respectively \cite[Proposition 4.5.3]{kadison1983fundamentals}. Since $\pi_{\psi}$ and $\pi_{\phi}$ are unitarily inequivalent, so are $\pi_{E'}$ and $\pi_{F'}$. Thus $\pi_{E'}$ and $\pi_{F'}$ are disjoint \cite[Proposition 10.3.7]{kadison1986fundamentals}. In this case $E'F'=0$ \cite[Proposition 10.3.3]{kadison1986fundamentals}. Since $x \in E'\mathcal{H}$ and $y \in F'\mathcal{H}$, $\langle x, y \rangle =0$. Thus $F(\psi,\phi)=0$. Therefore $\psi \cdot \phi=F(\psi,\phi)^2$.

\item
Suppose that $\pi_{\psi}$ and $\pi_{\phi}$ are unitarily equivalent. Then there is a unitary element $U$ in $\mathfrak{A}$ such that $\phi(A)=\psi(U^*AU)$ for any $A \in \mathfrak{A}$ \cite[Theorem 10.2.6]{kadison1986fundamentals}. Then $x_{\psi} \in S(\pi_{\psi}, \psi)$ and $\pi_{\psi}(U)x_{\psi} \in S(\pi_{\psi}, \phi)$, which imply
\begin{equation}
\label{transition001}
| \langle x_{\psi}, \pi_{\psi}(U)x_{\psi} \rangle | \leq F(\psi,\phi). 
\end{equation}
By Lemma \ref{inner_product} and Equation (\ref{transition001}), 
\[ \begin{split}
&\| \psi - \phi \| =2(1-|\langle x_{\psi}, \pi_{\psi}(U)x_{\psi} \rangle |^2)^{1/2} \geq 2(1-F(\psi,\phi)^2)^{1/2}.
\end{split} \]
On the other hand, $\| \psi - \phi \| \leq 2(1-F(\psi,\phi)^2)^{1/2}$ holds \cite[Section 3]{uhlmann1976transition}. Thus $\| \psi - \phi \| = 2(1-F(\psi,\phi)^2)^{1/2}$, which implies $F(\psi,\phi)^2=1-(1/4)\| \psi - \phi \|^2$. Therefore $\psi \cdot \phi=F(\psi,\phi)^2$.
\end{enumerate}
 \end{proof}


Next we characterize a nonabelian C*-algebra in terms of a fidelity, which is needed for the proof of the main theorem (Theorem \ref{imperfect_theorem}).

\begin{proposition}
\label{nonabelian}
The following conditions are equivalent.
\begin{enumerate}
\item $\mathfrak{A}$ is not abelian.
\item There are pure states $\psi$ and $\phi$ of $\mathfrak{A}$ such that $0 < F(\psi,\phi) < 1$.
\item For any real number $\alpha$ such that $0 \leq \alpha \leq 1$, there are pure states $\psi$ and $\phi$ of $\mathfrak{A}$ such that $F(\psi,\phi) = \alpha$.
\end{enumerate}
\end{proposition}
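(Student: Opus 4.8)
The plan is to establish the cycle of implications $(1) \Rightarrow (3) \Rightarrow (2) \Rightarrow (1)$, so that two cheap directions bracket the one substantive construction. The implication $(3) \Rightarrow (2)$ is immediate: specializing the hypothesis of (3) to any fixed $\alpha$ with $0 < \alpha < 1$ produces pure states $\psi, \phi$ with $0 < F(\psi,\phi) < 1$.

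For $(2) \Rightarrow (1)$ I would argue by contraposition and show that an abelian $\mathfrak{A}$ admits no pure states with $0 < F(\psi,\phi) < 1$. If $\mathfrak{A}$ is abelian, every pure state is a character and its GNS representation is one-dimensional. Given pure states $\psi, \phi$, if $\psi = \phi$ then $F(\psi,\phi) = 1$; if $\psi \neq \phi$, their one-dimensional GNS representations are unitarily inequivalent, hence disjoint, so $\|\psi - \phi\| = 2$ by the same Kadison--Ringrose result already invoked in case 1 of the proof of Proposition \ref{transition}, whence $F(\psi,\phi)^2 = \psi \cdot \phi = 1 - \tfrac14 \|\psi - \phi\|^2 = 0$. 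In either case $F(\psi,\phi) \in \{0, 1\}$, so (2) fails.

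The heart of the matter is $(1) \Rightarrow (3)$. The first step is to convert nonabelianness into the existence of an irreducible representation $\pi$ on a Hilbert space $\mathcal{H}$ with $\dim \mathcal{H} \geq 2$; this rests on the standard fact that a C*-algebra is abelian if and only if each of its irreducible representations is one-dimensional. Given such a $\pi$, I would fix orthonormal vectors $e_1, e_2 \in \mathcal{H}$ and, for a prescribed $\alpha \in [0,1]$, set $x = e_1$ and $y = \alpha e_1 + \sqrt{1 - \alpha^2}\, e_2$, with associated vector states $\psi(A) = \langle x, \pi(A) x \rangle$ and $\phi(A) = \langle y, \pi(A) y \rangle$. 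Since $\pi$ is irreducible these are pure states, and since any nonzero vector of an irreducible representation is cyclic, $(\pi, \mathcal{H}, x)$ realizes the GNS representation of $\psi$; hence Lemma \ref{inner_product} applies and gives $\|\psi - \phi\| = 2(1 - |\langle x, y \rangle|^2)^{1/2} = 2(1-\alpha^2)^{1/2}$. Proposition \ref{transition} then yields $F(\psi,\phi)^2 = \psi \cdot \phi = 1 - \tfrac14 \|\psi - \phi\|^2 = \alpha^2$, that is $F(\psi,\phi) = \alpha$.

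I expect the main obstacle to be the first step of $(1) \Rightarrow (3)$: extracting a genuinely two-dimensional arena from the bare algebraic hypothesis that $\mathfrak{A}$ fails to be commutative. Once an irreducible representation of dimension at least two is in hand, the remaining verifications --- purity of the two vector states, the identification of the GNS triple that licenses Lemma \ref{inner_product} with $\pi$ itself in place of an abstract GNS representation, and the interpolation in $\alpha$ --- are routine. I would take particular care to record why every nonzero vector in an irreducible representation is cyclic, since that is precisely what allows Lemma \ref{inner_product} to be applied in the concrete representation $\pi$.
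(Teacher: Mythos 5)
Your proof is correct, and its substantive core --- the implication $(1)\Rightarrow(3)$ --- is essentially the construction the paper uses: realize the algebra irreducibly on a Hilbert space of dimension at least two, take $x=e_1$ and $y=\alpha e_1+\sqrt{1-\alpha^2}\,e_2$, and convert $|\langle x,y\rangle|=\alpha$ into $F(\psi,\phi)=\alpha$ via Lemma \ref{inner_product} and Proposition \ref{transition}. (The paper reaches the irreducible representation slightly differently: it picks a pure state $\psi$ with $\psi([A,B])\neq 0$ and uses its GNS representation, citing Kadison--Ringrose Theorem 4.3.8, rather than invoking the characterization of abelian C*-algebras by one-dimensionality of irreducible representations; both routes are standard and equivalent in effect.) Where you genuinely diverge is in the logical organization and in the remaining implication. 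The paper proves $1\Leftrightarrow 2$ in one stroke by citing Kadison--Ringrose Exercise 4.6.26 ($\mathfrak{A}$ is nonabelian iff some pair of pure states has $0<\|\psi-\phi\|<2$) together with Proposition \ref{transition}, and then adds the trivial $3\Rightarrow 2$. You instead close a cycle $1\Rightarrow 3\Rightarrow 2\Rightarrow 1$ and prove $2\Rightarrow 1$ by contraposition from first principles: pure states of a unital abelian C*-algebra are characters, distinct characters have disjoint one-dimensional GNS representations, hence $\|\psi-\phi\|=2$ and $F(\psi,\phi)=0$. This buys you self-containedness --- you need only the easy direction of the exercise, and you derive it rather than cite it --- at the cost of a few extra lines; the paper's version is shorter but leans on an external reference for the full biconditional. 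Your two flagged points of care (cyclicity of every nonzero vector in an irreducible representation of a unital algebra, and the identification of $(\pi,\mathcal{H},x)$ with the GNS triple up to unitary equivalence so that Lemma \ref{inner_product} applies) are exactly the right ones, and both go through.
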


\begin{proof}
\begin{description}
\item[$1 \Leftrightarrow 2$] $\mathfrak{A}$ is not abelian if and only if there are pure states $\psi$ and $\phi$ of $\mathfrak{A}$ such that $0 < \| \psi - \phi \| < 2$ \cite[Exercise 4.6.26]{kadison1983fundamentals}. Thus $\mathfrak{A}$ is not abelian if and only if there are pure states $\psi$ and $\phi$ of $\mathfrak{A}$ such that $0< F(\psi, \phi) < 1$ by Proposition \ref{transition}.
\item[$3 \Rightarrow 2$] Trivial.
\item[$1 \Rightarrow 3$]
Suppose that $\mathfrak{A}$ is not abelian. Then there are elements $A$ and $B$ in $\mathfrak{A}$ such that $[A,B] \neq 0$ and there is a pure state $\psi$ of $\mathfrak{A}$ such that $\psi([A,B]) \neq 0$, where $[A,B]=AB-BA$ \cite[Theorem 4.3.8]{kadison1983fundamentals}. Let $(\pi_{\psi}, \mathcal{H}_{\psi}, x_{\psi})$ be the GNS representation of $\mathfrak{A}$ induced by $\psi$. The dimension of the Hilbert space $\mathcal{H}_{\psi}$ exceeds one; for, otherwise, $\psi([A,B])=\langle x_{\psi},\pi_{\psi}([A,B])x_{\psi} \rangle=0$ in contradiction with the fact that $\psi([A,B]) \neq 0$. Thus there is a unit vector $y \in \mathcal{H}_{\psi}$ which is orthogonal to $x_{\psi}$ \cite[Proof of Lemma 4]{clifton2003characterizing}.

Let $\alpha$ be a real number such that $0 \leq \alpha \leq 1$, let $\theta$ be a real number such that $ \cos \theta = \alpha$ and let $z=\cos \theta \cdot x_{\psi} + \sin \theta \cdot y$ and let $\phi$ be a state of $\mathfrak{A}$ such that $\phi(A)=\langle z, \pi_{\psi}(A_1) z \rangle$ for any $A \in \mathfrak{A}$. Then $\phi$ is a pure state of $\mathfrak{A}$ \cite[Corollary 10.2.5]{kadison1986fundamentals}. By Lemma \ref{inner_product}, $\| \psi - \phi \|=2(1-| \langle x_{\psi},z \rangle |^2)^{1/2}=2 \sin \theta$. By Proposition \ref{transition}, $F(\psi,\phi)=\cos \theta = \alpha$.
\end{description}
 \end{proof}


\section{No-cloning theorem}
\label{imperfect_cloning}
In this section we examine no-cloning theorem in the above-mentioned C*-algebraic framework. Before defining a perfect cloning operation, we must define a completely positive map, which gives the most general dynamical evolution in a C*-algebraic framework. Recall that a linear mapping $T$ of $\mathfrak{A}$ is positive just in case $A \geq 0$ entails $T(A) \geq 0$. $T$ can be extended to a linear map $T_n$ of $M_n(\mathfrak{A})$ by
\[ T_n 
\begin{pmatrix}
A_{11} & \dots & A_{1n} \\
\cdot & \cdot & \cdot \\
A_{n1} & \dots & A_{nn}
\end{pmatrix}
=
\begin{pmatrix}
T(A_{11}) & \dots & T(A_{1n}) \\
\cdot & \cdot & \cdot \\
T(A_{n1}) & \dots & T(A_{nn})
\end{pmatrix},
\]
where $M_n(\mathfrak{A})$ is the set of $n$ by $n$ matrices with entries which are elements from the C*-algebra $\mathfrak{A}$.
If $T_n$ is positive, $T$ is said to be $n$-positive. If $T$ is $n$-positive for any $n \in \mathbb{N}$, $T$ is said to be completely positive. A positive map $T$ satisfying $T(I)=I$ is called a unital positive map.

If $T$ is a unital $2$-positive map of $\mathfrak{A}$, then $T(A)^*T(A) \leq T(A^*A)$ for any $A \in \mathfrak{A}$ \cite[Proposition 3.3]{paulsen2002completely}. Thus for any state $\psi$ and $\phi$ of $\mathfrak{A}$ and any unital $2$-positive map $T$ of $\mathfrak{A}$,
\begin{equation}
\label{monotone}
F(\psi, \phi) \leq F(T^*\psi, T^*\phi)
\end{equation}
by \cite[Theorem 4.2]{uhlmann1985transition}.




If $T$ is a unital completely positive map of $\mathfrak{A}$ and $\psi$ is a state of $\mathfrak{A}$, then the mapping $T^*$ of the set of all states of $\mathfrak{A}$ can be defined by $(T^*\psi)(A)=\psi(T(A))$ for any state $\psi$ of $\mathfrak{A}$ and any $A \in \mathfrak{A}$. $T$ captures the dynamic change which occurs as the result of some physical process. $\psi$ is the initial state before the process, and $T^* \psi$ is the final state after the process occurs. 

A universal perfect cloning operations is defined as follows (cf. \cite[p.1578]{clifton2003characterizing}).

\begin{definition}
\label{cloning_definition}
Let $T$ be a unital completely positive map of $\mathfrak{A}_1 \vee \mathfrak{A}_2$ and let $\sigma_2$ be a state of $\mathfrak{A}_2$. We say that $T$ is a universal perfect cloning operation just in case that $T^*(\psi_1 \otimes \sigma_2)=\psi_1 \otimes  \psi_2$ for any pure state $\psi_1$ of $\mathfrak{A}_1$, where $\psi_2$ is a state of $\mathfrak{A}_2$ which is isomorphic to $\psi_1$.
\end{definition}

The perfect cloning operation in Definition \ref{cloning_definition} takes $\psi_1$ as an input, and returns $\psi_2$ as an output. Since $\psi_2$ is isomorphic to $\psi_1$, $\psi_2$ is a perfect clone of $\psi_1$.
Clifton, Bub and Halvorson \cite[Theorem 2 and Theorem 3]{clifton2003characterizing} showed the following theorem.

\begin{theorem}[Clifton-Bub-Halvorson]
\label{CBH}
The following conditions are equivalent.
\begin{enumerate}
\item There is a universal perfect cloning operation of $\mathfrak{A}_1 \vee \mathfrak{A}_2$.
\item $\mathfrak{A}_1$ is abelian.
\end{enumerate}
\end{theorem}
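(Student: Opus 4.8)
The plan is to establish the two implications separately; the substance is in $1\Rightarrow 2$, while $2\Rightarrow 1$ is an explicit construction. For $2\Rightarrow 1$, assume $\mathfrak{A}_1$ is abelian and write $\beta=\alpha^{-1}$ for the *-isomorphism of $\mathfrak{A}_2$ onto $\mathfrak{A}_1$. I would build $T$ on $\mathfrak{A}_1\otimes\mathfrak{A}_2$ as a composite of three unital *-homomorphisms: the isomorphism $\mathrm{id}\otimes\beta:\mathfrak{A}_1\otimes\mathfrak{A}_2\to\mathfrak{A}_1\otimes\mathfrak{A}_1$, the multiplication $m:\mathfrak{A}_1\otimes\mathfrak{A}_1\to\mathfrak{A}_1$ sending $B\otimes C\mapsto BC$, and the embedding $A\mapsto A\otimes I$. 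Abelianness enters only in checking that $m$ is multiplicative, via $BCB'C'=BB'CC'$; granting this, $T$ is a unital *-homomorphism, hence unital completely positive, with $T(A_1\otimes A_2)=A_1\beta(A_2)\otimes I$ on simple tensors. Choosing $\sigma_2$ to be any state of $\mathfrak{A}_2$ and using that a pure state $\psi_1$ of an abelian algebra is multiplicative, one gets $T^*(\psi_1\otimes\sigma_2)(A_1\otimes A_2)=\psi_1(A_1\beta(A_2))=\psi_1(A_1)\psi_1(\beta(A_2))=(\psi_1\otimes\psi_2)(A_1\otimes A_2)$, so $T$ is a universal perfect cloning operation in the sense of Definition \ref{cloning_definition}.

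For $1\Rightarrow 2$ I would argue by contraposition: assume a cloning operation $T$ with witness state $\sigma_2$ exists but $\mathfrak{A}_1$ is not abelian. By Proposition \ref{nonabelian} there are pure states $\psi_1,\phi_1$ of $\mathfrak{A}_1$ with $0<F(\psi_1,\phi_1)<1$; let $\psi_2,\phi_2$ be their isomorphic images, which are again pure and satisfy $F(\psi_2,\phi_2)=F(\psi_1,\phi_1)$ by Equation (\ref{e0.01}). Applying $T$ to the inputs $\psi_1\otimes\sigma_2$ and $\phi_1\otimes\sigma_2$ and invoking monotonicity of fidelity under the completely positive map $T$ (Equation (\ref{monotone})) yields
\[ F(\psi_1\otimes\sigma_2,\phi_1\otimes\sigma_2)\leq F(\psi_1\otimes\psi_2,\phi_1\otimes\phi_2). \]
The argument then hinges on bounding the two sides against each other: I claim the left side is at least $F(\psi_1,\phi_1)$ while the right side equals $F(\psi_1,\phi_1)^2$, forcing $F(\psi_1,\phi_1)\leq F(\psi_1,\phi_1)^2$ and hence $F(\psi_1,\phi_1)\geq 1$, contradicting $F(\psi_1,\phi_1)<1$.

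The heart of the matter, which I expect to be the main obstacle, is computing these fidelities of product states, as the framework supplies no multiplicativity law for free. For the lower bound I would work inside the representation $\pi_{\psi_1}\otimes\pi_{\sigma_2}$, where $\pi_{\sigma_2}$ is the GNS representation of $\sigma_2$ with cyclic vector $w$ and $v_1=\pi_{\psi_1}(U)x_{\psi_1}$ implements $\phi_1$ (the unitary $U$ exists because $F(\psi_1,\phi_1)>0$ makes the GNS representations of $\psi_1$ and $\phi_1$ unitarily equivalent, as in case~2 of the proof of Proposition \ref{transition}). Since $x_{\psi_1}\otimes w$ and $v_1\otimes w$ implement $\psi_1\otimes\sigma_2$ and $\phi_1\otimes\sigma_2$, the definition of $F$ gives $F(\psi_1\otimes\sigma_2,\phi_1\otimes\sigma_2)\geq|\langle x_{\psi_1}\otimes w,v_1\otimes w\rangle|=|\langle x_{\psi_1},v_1\rangle|=F(\psi_1,\phi_1)$, where the last equality is the GNS-vector identity extracted from Lemma \ref{inner_product}. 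For the right side I would use that a product of pure states is pure on the minimal tensor product, so $\psi_1\otimes\psi_2$ has GNS representation $\pi_{\psi_1}\otimes\pi_{\psi_2}$ with cyclic vector $x_{\psi_1}\otimes x_{\psi_2}$, while $\phi_1\otimes\phi_2$ is the vector state of $v_1\otimes v_2$ in the same representation. Proposition \ref{transition} together with the Roberts--Roepstorff identity (Proposition \ref{roberts}) then evaluates the fidelity as $|\langle x_{\psi_1}\otimes x_{\psi_2},v_1\otimes v_2\rangle|=|\langle x_{\psi_1},v_1\rangle|\,|\langle x_{\psi_2},v_2\rangle|=F(\psi_1,\phi_1)F(\psi_2,\phi_2)=F(\psi_1,\phi_1)^2$.

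The delicate points I would have to nail down are the purity of the product states $\psi_1\otimes\psi_2$ and $\phi_1\otimes\phi_2$ on the minimal tensor product and the reduction of both output states to vector states in the single irreducible representation $\pi_{\psi_1}\otimes\pi_{\psi_2}$; these are exactly what make the inner product factor and yield multiplicativity of the fidelity on pure product states. Once that multiplicativity is secured the two bounds collide and the contradiction is immediate, completing the contrapositive and hence the equivalence.
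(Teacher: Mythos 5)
Your proof is correct, but note that the paper itself does not prove Theorem \ref{CBH}; it imports it from Clifton, Bub and Halvorson, so the natural in-paper benchmark is the proof of Theorem \ref{imperfect_theorem}, of which your direction $1\Rightarrow 2$ is exactly the $\epsilon=0$ case. Structurally you match that proof: choose pure $\psi_1,\phi_1$ with $0<F(\psi_1,\phi_1)<1$ via Proposition \ref{nonabelian}, apply monotonicity (\ref{monotone}) to the inputs $\psi_1\otimes\sigma_2$ and $\phi_1\otimes\sigma_2$, and derive $F\leq F^2$ from multiplicativity of fidelity on product states. Where you genuinely diverge is in how the product fidelities are computed. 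The paper builds general machinery (Lemmas \ref{sakai}, \ref{promislow}, \ref{product_lemma}, \ref{minimal}, \ref{product}, \ref{Bures_pure}) to get $F(\psi_1\otimes\psi_2,\phi_1\otimes\phi_2)=F(\psi_1,\phi_1)F(\psi_2,\phi_2)$ for arbitrary states, which it needs because $\sigma_2$ may be mixed; you observe that for the input side only the lower bound $F(\psi_1\otimes\sigma_2,\phi_1\otimes\sigma_2)\geq|\langle x_{\psi_1}\otimes w,\,v_1\otimes w\rangle|=F(\psi_1,\phi_1)$ is needed, which is immediate from the supremum definition of $F$, while on the output side all states are pure, so purity of $\psi_1\otimes\psi_2$ on the minimal tensor product plus Lemma \ref{inner_product} and Proposition \ref{transition} give exact multiplicativity by factoring a single inner product. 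Both steps hold (the unitaries exist because $F>0$ forces unitary equivalence of the GNS representations, and purity of products of pure states on the minimal tensor product is standard), so your route is lighter than the paper's and avoids the Promislow-style approximation entirely. Your direction $2\Rightarrow 1$, the diagonal homomorphism $A_1\otimes A_2\mapsto A_1\beta(A_2)\otimes I$ combined with multiplicativity of pure states of an abelian algebra, is the standard CBH construction; the one point worth spelling out is that the multiplication map $m:\mathfrak{A}_1\otimes\mathfrak{A}_1\to\mathfrak{A}_1$ is a bona fide $*$-homomorphism on the minimal tensor product because for abelian $\mathfrak{A}_1\cong C(X)$ one has $\mathfrak{A}_1\otimes\mathfrak{A}_1\cong C(X\times X)$ and $m$ is restriction to the diagonal.
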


Since any individual system's algebra of observables is not abelian in algebraic quantum theory, there is no universal perfect cloning operation in algebraic quantum theory. Next we examine whether an imperfect cloning operation exists or not. In order to tackle this problem, we define an imperfect cloning operation which tolerates a finite loss $\epsilon$ of fidelity in the cloned state.

\begin{definition}
\label{imperfect_definition}
Let $T$ be a unital completely positive map of $\mathfrak{A}_1 \vee \mathfrak{A}_2$, let $\sigma_2$ be a state of $\mathfrak{A}_2$, and let $\epsilon$ be a real number such that $0 \leq \epsilon \leq 1$. We say that $T$ is a universal $\epsilon$-imperfect cloning operation just in case that for any pure state $\psi_1$ of $\mathfrak{A}_1$ there is a pure state $\bar{\psi}_2$ of $\mathfrak{A}_2$ such that $T^*(\psi_1 \otimes \sigma_2)=\psi_1 \otimes  \bar{\psi}_2$ and $F(\psi_2, \bar{\psi}_2) \geq 1 - \epsilon$, where $\psi_2$ is a state of $\mathfrak{A}_2$ which is isomorphic to $\psi_1$.

\end{definition}

The universal $\epsilon$-imperfect cloning operation in Definition \ref{imperfect_definition} takes $\psi_1$ as an input, and returns $\bar{\psi}_2$ as an output. The original state $\psi_1$ and the cloned state $\bar{\psi}_2$ are not entangled, and the loss of fidelity is less than $\epsilon$. If $\epsilon$ is $0$, then $\bar{\psi}_2=\psi_2$, that is, $\bar{\psi}_2$ is a perfect clone of $\psi_1$. Thus a universal $0$-imperfect cloning operation is equal to a universal perfect cloning operation.


In the following theorem, we examine whether a universal $\epsilon$-imperfect cloning operation which is `good' according to fidelity exists or not in algebraic quantum theory. The term `good' means that the loss of fidelity is less than $1/4$.





\begin{theorem}
\label{imperfect_theorem}
Let $\epsilon$ be a real number such that $0 \leq \epsilon < 1/4$. If there is a universal $\epsilon$-imperfect cloning operation of $\mathfrak{A}_1 \vee \mathfrak{A}_2$, then $\mathfrak{A}_1$ is abelian.
\end{theorem}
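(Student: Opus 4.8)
The plan is to prove the contrapositive: assume $\mathfrak{A}_1$ is not abelian and derive that no universal $\epsilon$-imperfect cloning operation can exist for $\epsilon < 1/4$. This matches the structure of the Clifton--Bub--Halvorson argument (Theorem \ref{CBH}), where nonabelianness of $\mathfrak{A}_1$ obstructs cloning, and it lets me feed directly into the fidelity machinery of Proposition \ref{nonabelian}.

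Let me think about what an imperfect cloning operation must do.

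Let me set up the core idea. Suppose toward a contradiction that $\mathfrak{A}_1$ is not abelian, yet a universal $\epsilon$-imperfect cloning operation $T$ exists. By Proposition \ref{nonabelian} (1 $\Rightarrow$ 3, or more to the point the existence of a continuum of fidelity values), I can select two pure states $\psi_1$ and $\phi_1$ of $\mathfrak{A}_1$ whose fidelity $F(\psi_1,\phi_1)$ is any prescribed value strictly between $0$ and $1$; I will tune this value later. Feeding each through the cloner, I obtain pure states $\bar\psi_2$ and $\bar\phi_2$ of $\mathfrak{A}_2$ with $T^*(\psi_1\otimes\sigma_2)=\psi_1\otimes\bar\psi_2$, $T^*(\phi_1\otimes\sigma_2)=\phi_1\otimes\bar\phi_2$, and with each cloned state close in fidelity to the ideal clone: $F(\psi_2,\bar\psi_2)\ge 1-\epsilon$ and $F(\phi_2,\bar\phi_2)\ge 1-\epsilon$, where $\psi_2,\phi_2$ are the isomorphic copies of $\psi_1,\phi_1$ in $\mathfrak{A}_2$.

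The heart of the argument is a two-sided fidelity comparison that will be contradictory when $\epsilon<1/4$.

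On the one hand, the input product states $\psi_1\otimes\sigma_2$ and $\phi_1\otimes\sigma_2$ share the same second factor, so their fidelity equals $F(\psi_1,\phi_1)$ (fidelity of a product state factorizes, or at least the common factor $\sigma_2$ contributes a factor $1$). Monotonicity of fidelity under the unital completely positive map $T$, namely Equation (\ref{monotone}), gives
\begin{equation}
F(\psi_1,\phi_1)=F(\psi_1\otimes\sigma_2,\phi_1\otimes\sigma_2)\le F\bigl(T^*(\psi_1\otimes\sigma_2),T^*(\phi_1\otimes\sigma_2)\bigr)=F(\psi_1\otimes\bar\psi_2,\phi_1\otimes\bar\phi_2).
\end{equation}
On the other hand, the output fidelity factorizes as $F(\psi_1\otimes\bar\psi_2,\phi_1\otimes\bar\phi_2)=F(\psi_1,\phi_1)\,F(\bar\psi_2,\bar\phi_2)$, so combining with the line above forces $F(\bar\psi_2,\bar\phi_2)\ge 1$, hence $F(\bar\psi_2,\bar\phi_2)=1$, i.e. the clones are indistinguishable regardless of how distinguishable the inputs were. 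The final step is to contradict this using the triangle inequality (\ref{triangle}) for the Bures distance $D$ together with (\ref{fidelity_distance}): since $\bar\psi_2\approx\psi_2$ and $\bar\phi_2\approx\phi_2$ in fidelity (loss $<\epsilon$, so $D(\psi_2,\bar\psi_2)$ and $D(\phi_2,\bar\phi_2)$ are each small, bounded by $\sqrt{2\epsilon}$), while $D(\bar\psi_2,\bar\phi_2)=0$, the triangle inequality yields $D(\psi_2,\phi_2)\le D(\psi_2,\bar\psi_2)+D(\bar\psi_2,\bar\phi_2)+D(\bar\phi_2,\phi_2)\le 2\sqrt{2\epsilon}$, whence $F(\psi_2,\phi_2)\ge 1-\tfrac12(2\sqrt{2\epsilon})^2=1-4\epsilon$. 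But $F(\psi_2,\phi_2)=F(\psi_1,\phi_1)$ by the isomorphism invariance (\ref{e0.01}), and I chose $F(\psi_1,\phi_1)$ strictly below $1-4\epsilon$, which is possible precisely because $\epsilon<1/4$ makes $1-4\epsilon>0$. This is the contradiction.

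I expect the main obstacle to be rigorously justifying the factorization of the output fidelity, $F(\psi_1\otimes\bar\psi_2,\phi_1\otimes\bar\phi_2)=F(\psi_1,\phi_1)F(\bar\psi_2,\bar\phi_2)$, in the C*-algebraic (rather than matrix) setting, since this is where the Bures-representation apparatus and the structure of product states on $\mathfrak{A}_1\vee\mathfrak{A}_2$ must be invoked carefully; the GNS vectors of a product state should be realizable as tensor products in a suitable representation, letting the supremum defining $F$ split. The constants also need watching: the clean numerology ($4\epsilon$, threshold $1/4$) depends on propagating the fidelity loss through $D$ via (\ref{fidelity_distance}) with exactly the right square-root factors, so I would verify the estimate $D(\psi_2,\bar\psi_2)\le\sqrt{2\epsilon}$ and the final conversion back to fidelity explicitly rather than trusting the arithmetic at a glance.
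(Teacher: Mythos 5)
Your proposal is correct and follows essentially the same route as the paper's proof: choose pure states with $0<F(\psi_1,\phi_1)<1-4\epsilon$ via Proposition \ref{nonabelian}, use monotonicity (\ref{monotone}) together with the product factorization of fidelity (the paper's Lemmas \ref{Bures_pure} and \ref{product}) to force $F(\bar\psi_2,\bar\phi_2)=1$, and then derive the contradiction $F(\psi_2,\phi_2)\ge 1-4\epsilon$ from the triangle inequality (\ref{triangle}) and (\ref{fidelity_distance}). The one step you flag as the main obstacle, the factorization $F(\psi_1\otimes\bar\psi_2,\phi_1\otimes\bar\phi_2)=F(\psi_1,\phi_1)F(\bar\psi_2,\bar\phi_2)$, is indeed where the paper invests its technical effort, and it is resolved exactly as you anticipate, via the Bures representation and tensor products of GNS-type vectors.
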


Using Theorem \ref{CBH} and Theorem \ref{imperfect_theorem}, we can get the following corollary.

\begin{corollary}
\label{summary}
Let $\epsilon$ be a real number such that $0 \leq \epsilon < 1/4$. The following conditions are equivalent.
\begin{enumerate}
\item There is a universal perfect cloning operation of $\mathfrak{A}_1 \vee \mathfrak{A}_2$.
\item There is a universal $\epsilon$-imperfect cloning operation of $\mathfrak{A}_1 \vee \mathfrak{A}_2$.
\item $\mathfrak{A}_1$ is abelian.
\end{enumerate}
\end{corollary}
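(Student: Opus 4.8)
The plan is to argue by contradiction. Assume a universal $\epsilon$-imperfect cloning operation $T$ exists with $0 \le \epsilon < 1/4$, and suppose toward a contradiction that $\mathfrak{A}_1$ is not abelian. Since $\epsilon < 1/4$ we have $1 - 4\epsilon > 0$, so I can fix a real number $\alpha$ with $0 < \alpha < 1 - 4\epsilon$. By Proposition \ref{nonabelian} (condition 3), the non-abelianness of $\mathfrak{A}_1$ lets me choose pure states $\psi_1$ and $\phi_1$ of $\mathfrak{A}_1$ with $F(\psi_1, \phi_1) = \alpha$. Let $\psi_2, \phi_2$ be the states of $\mathfrak{A}_2$ isomorphic to $\psi_1, \phi_1$, and let $\bar\psi_2, \bar\phi_2$ be the pure states of $\mathfrak{A}_2$ produced by the cloning operation, so that $T^*(\psi_1 \otimes \sigma_2) = \psi_1 \otimes \bar\psi_2$, $T^*(\phi_1 \otimes \sigma_2) = \phi_1 \otimes \bar\phi_2$, and $F(\psi_2, \bar\psi_2) \ge 1 - \epsilon$, $F(\phi_2, \bar\phi_2) \ge 1 - \epsilon$.

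The first main step is to show that the two clones must coincide, $\bar\psi_2 = \bar\phi_2$. Applying the monotonicity of fidelity under the unital completely positive map $T$, inequality (\ref{monotone}), to the input states $\psi_1 \otimes \sigma_2$ and $\phi_1 \otimes \sigma_2$ yields $F(\psi_1 \otimes \sigma_2, \phi_1 \otimes \sigma_2) \le F(\psi_1 \otimes \bar\psi_2, \phi_1 \otimes \bar\phi_2)$. Using multiplicativity of the fidelity on product states, the left-hand side equals $F(\psi_1, \phi_1) F(\sigma_2, \sigma_2) = F(\psi_1, \phi_1)$ and the right-hand side equals $F(\psi_1, \phi_1) F(\bar\psi_2, \bar\phi_2)$. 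Hence $F(\psi_1, \phi_1) \le F(\psi_1, \phi_1) F(\bar\psi_2, \bar\phi_2)$, and since $F(\psi_1, \phi_1) = \alpha > 0$ this forces $F(\bar\psi_2, \bar\phi_2) = 1$. Because $\bar\psi_2$ and $\bar\phi_2$ are pure, Proposition \ref{transition} gives $\| \bar\psi_2 - \bar\phi_2 \|^2 = 4\bigl(1 - F(\bar\psi_2, \bar\phi_2)^2\bigr) = 0$, so $\bar\psi_2 = \bar\phi_2$.

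The second step converts the fidelity tolerance into a distance bound. From $F(\psi_2, \bar\psi_2) \ge 1 - \epsilon$ and (\ref{fidelity_distance}) I get $D(\psi_2, \bar\psi_2)^2 = 2 - 2F(\psi_2, \bar\psi_2) \le 2\epsilon$, and likewise $D(\phi_2, \bar\phi_2) \le \sqrt{2\epsilon}$. The triangle inequality (\ref{triangle}), together with $\bar\psi_2 = \bar\phi_2$, then gives $D(\psi_2, \phi_2) \le D(\psi_2, \bar\psi_2) + D(\bar\phi_2, \phi_2) \le 2\sqrt{2\epsilon}$, so that $D(\psi_2, \phi_2)^2 \le 8\epsilon$. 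Applying (\ref{fidelity_distance}) again and the isomorphism invariance (\ref{e0.01}), which gives $F(\psi_2, \phi_2) = F(\psi_1, \phi_1)$, I obtain $F(\psi_1, \phi_1) = 1 - \tfrac{1}{2} D(\psi_2, \phi_2)^2 \ge 1 - 4\epsilon$. This contradicts the choice $F(\psi_1, \phi_1) = \alpha < 1 - 4\epsilon$, and therefore $\mathfrak{A}_1$ must be abelian.

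The step I expect to be the main obstacle is the use of multiplicativity of the fidelity on product states, $F(\mu_1 \otimes \mu_2, \nu_1 \otimes \nu_2) = F(\mu_1, \nu_1) F(\mu_2, \nu_2)$, since it is not among the properties recorded earlier in the excerpt. The inequality $\ge$ (which is all I need on the left) is straightforward: tensoring vectors from $S(\pi_1, \mu_1)$ and $S(\pi_2, \mu_2)$ produces vectors in $S(\pi_1 \otimes \pi_2, \mu_1 \otimes \mu_2)$, and the inner products factor. The reverse inequality $\le$, needed on the right, is the delicate part and would be supplied either by citing the C*-algebraic fidelity results of Alberti and Uhlmann or by a separate lemma; alternatively, since all four states there are pure, one could bound $F(\psi_1 \otimes \bar\psi_2, \phi_1 \otimes \bar\phi_2)$ directly through Proposition \ref{transition} and the tensor-product structure of the norm. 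A minor point is that the product states appearing above are well-defined states of $\mathfrak{A}_1 \vee \mathfrak{A}_2$, which is guaranteed by the uniqueness of product states recorded in the preliminaries.
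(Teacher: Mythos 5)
You have proved only the implication (2) $\Rightarrow$ (3), and that part of your argument is essentially identical to the paper's proof of Theorem \ref{imperfect_theorem}: choose pure states with $0<F(\psi_1,\phi_1)<1-4\epsilon$ via Proposition \ref{nonabelian}, use monotonicity (\ref{monotone}) and multiplicativity of fidelity on product states to force $F(\bar{\psi}_2,\bar{\phi}_2)=1$, hence $\bar{\psi}_2=\bar{\phi}_2$, and then combine (\ref{fidelity_distance}) and the triangle inequality (\ref{triangle}) to get $F(\psi_2,\phi_2)\geq 1-4\epsilon$, contradicting (\ref{e0.01}). But the corollary asserts a three-way equivalence, and the remaining implications are never addressed. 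The paper closes the cycle as (2) $\Rightarrow$ (3) $\Rightarrow$ (1) $\Rightarrow$ (2): the step (3) $\Rightarrow$ (1) is the nontrivial converse direction of the Clifton--Bub--Halvorson theorem (Theorem \ref{CBH}), which must be invoked, and (1) $\Rightarrow$ (2) is the easy observation that a universal perfect cloning operation is a universal $\epsilon$-imperfect one (take $\bar{\psi}_2=\psi_2$, which is pure since it is the image of the pure state $\psi_1$ under the *-isomorphism $\alpha$, and $F(\psi_2,\psi_2)=1\geq 1-\epsilon$). Without (3) $\Rightarrow$ (1) your proof of the corollary is incomplete.

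Within your (2) $\Rightarrow$ (3) argument, the step you flag as the main obstacle --- $F(\mu_1\otimes\mu_2,\nu_1\otimes\nu_2)=F(\mu_1,\nu_1)F(\mu_2,\nu_2)$ --- is indeed the technical heart of the paper and is not a citation-level fact in this C*-algebraic setting. The paper proves $F_{\pi_1\otimes\pi_2}(\psi_1\otimes\psi_2,\phi_1\otimes\phi_2)=F(\psi_1,\phi_1)F(\psi_2,\phi_2)$ for Bures representations $\pi_j$ (Lemma \ref{product}, resting on the Sakai--Radon--Nikod\'ym theorem and a Promislow-style perturbation $\phi_j'=(1-a)\phi_j+a\psi_j$), and then upgrades $F_{\pi_1\otimes\pi_2}$ to $F$ on the output side using the purity of $\psi_1\otimes\bar{\psi}_2$ together with Lemma \ref{Bures_pure}; on the input side only the trivial inequality $F\geq F_{\pi_1\otimes\pi_2}$ is needed, exactly as you observe. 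So this particular gap is correctly diagnosed and fillable, but as written it remains an unproven step; the genuinely missing content is the appeal to Theorem \ref{CBH} for (3) $\Rightarrow$ (1).
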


Generally any individual quantum system's algebra of observables is not abelian. So the universal $\epsilon$-imperfect cloning operations do not exist in algebraic quantum theory in the case where $0 \leq \epsilon < 1/4$.

For the proof of Theorem \ref{imperfect_theorem}, we will need to invoke technical lemmas. 

\begin{lemma}
\label{Bures_pure}
Let $\psi$ be a state of $\mathfrak{A}$, let $\phi$ be a pure state of $\mathfrak{A}$, and let $\pi$ be a representation of $\mathfrak{A}$ on a Hilbert space $\mathcal{H}$ such that $S(\pi,\psi)$ and $S(\pi,\phi)$ are not empty. Then $F(\psi,\phi)=F_{\pi}(\psi, \phi)$.
\end{lemma}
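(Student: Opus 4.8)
The plan is to reduce the statement to its nontrivial half and then exploit the rigidity that purity of $\phi$ forces on its vector representatives. Since $F(\psi,\phi)$ is a supremum over all representations while $F_\pi(\psi,\phi)$ is computed in the single representation $\pi$, the inequality $F_\pi(\psi,\phi)\le F(\psi,\phi)$ holds by definition; so it suffices to prove $F(\psi,\phi)\le F_\pi(\psi,\phi)$, that is, that the given $\pi$ already attains the supremum.

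The first step is to record a representation-independent formula for the fidelity to a pure state. Let $(\pi_\phi,\mathcal H_\phi,x_\phi)$ be the GNS representation of $\phi$; since $\phi$ is pure, $\pi_\phi$ is irreducible. For any representation $\rho$ and any $\xi\in S(\rho,\psi)$, $\zeta\in S(\rho,\phi)$, the cyclic subspace $[\rho(\mathfrak A)\zeta]$ is irreducible and carries a unitary intertwiner $W$ onto $\mathcal H_\phi$ with $W\zeta=x_\phi$; writing $P$ for the projection onto it, $\langle\xi,\zeta\rangle=\langle P\xi,\zeta\rangle=\langle u,x_\phi\rangle$ with $u:=WP\xi\in\mathcal H_\phi$, and $\langle u,\pi_\phi(\cdot)u\rangle\le\psi$ as positive functionals. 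Conversely, given any $u\in\mathcal H_\phi$ with $\langle u,\pi_\phi(\cdot)u\rangle\le\psi$, the representation $\pi_\phi\oplus(\text{GNS of }\psi-\langle u,\pi_\phi(\cdot)u\rangle)$ realizes the overlap $\langle u,x_\phi\rangle$ exactly. I would conclude
\[ F(\psi,\phi)=\sup\{\,|\langle u,x_\phi\rangle| : u\in\mathcal H_\phi,\ \langle u,\pi_\phi(\cdot)u\rangle\le\psi\,\}=:M. \]

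The second step applies the same compression inside $\pi$. Fixing $y\in S(\pi,\phi)$, the subspace $K:=[\pi(\mathfrak A)y]$ is irreducible (so its projection $P$ is minimal in $\pi(\mathfrak A)'$) with intertwiner $W:K\to\mathcal H_\phi$, $Wy=x_\phi$. For every $x\in S(\pi,\psi)$ one gets $\langle x,y\rangle=\langle WPx,x_\phi\rangle$ with $WPx$ feasible for $M$, so $F_\pi(\psi,\phi)\le M$ and the whole problem becomes showing that the feasible vectors arising as $WPx$ exhaust enough of the constraint set to attain $M$. This is where I would use that $S(\pi,\psi)\neq\emptyset$: fixing $x_0\in S(\pi,\psi)$, the copy $[\pi(\mathfrak A)x_0]\cong\pi_\psi$ sits inside $\pi$, and (when $\psi,\phi$ are not disjoint, the disjoint case giving $F=F_\pi=0$ trivially) $\pi_\phi$ occurs in it, so both states are represented in one copy of $\pi_\psi$. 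Decomposing the $\pi_\phi$-isotypic component of $\pi$ as $\mathcal H_\phi\otimes(\text{multiplicity space})$ and letting unitaries of the commutant act on $x_0$, I would show that the vectors $WPx$ range over $\{\Theta c:\|c\|=1\}$ for a fixed linear map $\Theta$ from the multiplicity space, a set whose elements I must verify include a maximizer for $M$.

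The main obstacle is exactly this last achievability claim: that the vectors realizable as $WPx$ for $x\in S(\pi,\psi)$ include a maximizer of $|\langle u,x_\phi\rangle|$ over $\{u:\langle u,\pi_\phi(\cdot)u\rangle\le\psi\}$. The constraint set is convex and balanced and $u\mapsto|\langle u,x_\phi\rangle|$ is the modulus of a linear functional, so its supremum is attained at an extreme point, i.e. at a $u$ for which the remainder $\psi-\langle u,\pi_\phi(\cdot)u\rangle$ is as small as possible; the content of the obstacle is to check that such extreme feasible vectors are precisely the ones the commutant of $\pi$ produces from $x_0$, which relies on both the rigidity of $\phi$'s representatives (purity, giving a minimal $P$ and a clean multiplicity decomposition) and on $\pi$ containing a full copy of $\pi_\psi$. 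Once this is in place, $F_\pi(\psi,\phi)\ge M=F(\psi,\phi)$, and together with the trivial bound this yields $F(\psi,\phi)=F_\pi(\psi,\phi)$.
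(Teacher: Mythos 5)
Your proposal establishes only the easy half of the lemma. The inequality $F_{\pi}(\psi,\phi)\le F(\psi,\phi)$ is immediate from the definition, and your representation-independent formula $F(\psi,\phi)=M:=\sup\{|\langle u,x_\phi\rangle| : u\in\mathcal H_\phi,\ \langle u,\pi_\phi(\cdot)u\rangle\le\psi\}$ is correct and correctly derived (compressing onto the cyclic subspace $[\rho(\mathfrak A)\zeta]$, which by purity of $\phi$ is a copy of the irreducible GNS representation of $\phi$, produces a feasible $u$ with the same overlap, and the direct-sum construction gives the converse). But the step $F_{\pi}(\psi,\phi)\ge M$ --- which you yourself flag as ``the main obstacle'' --- is the entire content of the lemma: once $F=M$ and $F_{\pi}\le F$ are in hand, proving $F_{\pi}\ge M$ for an \emph{arbitrary} $\pi$ with both $S$-sets nonempty is exactly the statement to be proved, so you have in effect reduced the lemma to itself. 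The plan you sketch (isotypic decomposition $\mathcal H_\phi\otimes\mathcal K$, the orbit of $x_0$ under the commutant, a linear map $\Theta$, extreme points of the constraint set) is not carried out, and it is not routine: you would need to determine precisely which feasible vectors $u$ arise as $WPx$ for $x\in S(\pi,\psi)$, i.e.\ to control the orbit of $x_0$ under partial isometries in $\pi(\mathfrak A)'$, and then show that this orbit produces a maximizer of $|\langle u,x_\phi\rangle|$. Nothing in the proposal does this.

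The paper sidesteps the difficulty by compressing onto the cyclic subspace generated by the $\psi$-vector rather than the $\phi$-vector. Given two representations $\pi$, $\pi'$ with nonempty $S$-sets, fix $x\in S(\pi,\psi)$ and $x'\in S(\pi',\psi)$ and let $E$, $E'$ be the projections onto $[\pi(\mathfrak A)x]$ and $[\pi'(\mathfrak A)x']$; the compressions are both GNS representations of $\psi$, so there is a unitary $U\colon E\mathcal H\to E'\mathcal H'$ with $Ux=x'$ intertwining them. Purity of $\phi$ then forces any $y\in S(\pi,\phi)$ to lie entirely in $E\mathcal H$ (whence $Uy\in S(\pi',\phi)$ and $|\langle x,y\rangle|=|\langle x',Uy\rangle|\le F_{\pi'}(\psi,\phi)$) or entirely in $(I-E)\mathcal H$ (whence $\langle x,y\rangle=0$). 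This gives $F_{\pi}=F_{\pi'}$ for all such representations, and that common value is the supremum defining $F$. If you want to keep your framework, the quickest repair is to import this comparison argument: apply it with $\pi'=\pi_\phi\oplus\pi_{\psi-\omega_u}$ for a near-maximizing feasible $u$, which yields $F_{\pi}=F_{\pi'}\ge|\langle u,x_\phi\rangle|$ directly and removes any need to analyze extreme points of the constraint set.
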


\begin{proof}
Let $\pi'$ be a representation of $\mathfrak{A}$ on a Hilbert space $\mathcal{H}'$ such that $S(\pi',\psi)$ and $S(\pi',\phi)$ are not empty. We will show that $F_{\pi}(\psi,\phi)=F_{\pi'}(\psi,\phi)$.

Let $x$, $y$, $x'$ and $y'$ be vectors in $S(\pi,\psi)$, $S(\pi,\phi)$, $S(\pi',\psi)$ and $S(\pi',\phi)$, respectively, and let $E \in \pi(\mathfrak{A})'$ and $E' \in \pi'(\mathfrak{A})'$ be projections whose ranges are subspaces generated by $\{ \pi(\mathfrak{A})x \}$ and $\{ \pi'(\mathfrak{A})x' \}$, respectively. Define representations $\pi_{E}$ and $\pi'_{E'}$ of $\mathfrak{A}$ as $\pi_E(A)=E\pi(A)E$ and $\pi'_{E'}=E'\pi'(A)E'$ for any $A \in \mathfrak{A}$, respectively. Then $(\pi_E, E\mathcal{H}, x)$ and $(\pi'_{E'}, E'\mathcal{H}', x')$ are GNS representations induced by $\psi$. Thus there is a unitary operator $U$ from $E\mathcal{H}$ onto $E'\mathcal{H}'$ such that $Ux=x'$ and $U^*\pi'_{E'}(A)U=\pi_E(A)$ for any $A \in \mathfrak{A}$ \cite[Proposition 4.5.3]{kadison1983fundamentals}. Therefore $Ux \in S(\pi', \psi)$.

There are vectors $y_1 \in E\mathcal{H}$ and $y_2 \in (I-E)\mathcal{H}$ such that $y=y_1+y_2$. Then
\[ \phi(A)=\langle y, \pi(A) y \rangle = \langle y_1, \pi(A) y_1 \rangle + \langle y_2, \pi(A) y_2 \rangle \]
for any $A \in \mathfrak{A}$. Since $\phi$ is a pure state of $\mathfrak{A}$, $y_1=0$ or $y_2=0$. Thus $y \in E\mathcal{H}$ or $y \in (I-E)\mathcal{H}$.

Suppose that $y \in E\mathcal{H}$. Then
\[ \langle Uy,\pi'(A)Uy \rangle = \langle y, U^*\pi'_{E'}(A)Uy \rangle = \langle y,\pi_E(A) y \rangle = \langle y,\pi(A)y \rangle = \phi(A) \]
for any $A \in \mathfrak{A}$. Thus $Uy \in S(\pi', \phi)$. Therefore $| \langle x,y \rangle | = | \langle Ux,Uy \rangle | \leq F_{\pi'}(\psi, \phi)$.

Suppose that $y \in (I-E)\mathcal{H}$. Then $| \langle x,y \rangle | = 0 \leq F_{\pi'}(\psi, \phi)$.

Taking the supremum over $x \in S(\pi, \psi)$ and $y \in S(\pi, \phi)$, $F_{\pi}(\psi,\phi) \leq F_{\pi'}(\psi,\phi)$. Similarly $F_{\pi}(\psi,\phi) \geq F_{\pi'}(\psi,\phi)$. Therefore $F_{\pi}(\psi,\phi) = F_{\pi'}(\psi,\phi)$. Since $F(\psi, \phi)=\sup_{\pi}F_{\pi}(\psi,\phi)$, $F(\psi,\phi)=F_{\pi}(\psi,\phi)$.
\end{proof}

\begin{lemma}
\label{sakai}
Let $\psi$ and $\phi$ be states of $\mathfrak{A}$, let $\pi$ be Bures representation and let $x$ be in $S(\pi, \psi)$. If $\phi \leq a \psi$ for some $a>0$, there exists a positive operator $Z \in \pi(\mathfrak{A})''$ such that $\phi(A)=\langle Zx, \pi(A)Zx \rangle$ for any $A \in \mathfrak{A}$.
\end{lemma}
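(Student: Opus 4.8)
The plan is to recognize Lemma~\ref{sakai} as an instance of the noncommutative Radon--Nikodym theorem of Sakai, after transporting the problem into the von Neumann algebra $\mathcal{M} := \pi(\mathfrak{A})''$ generated by the Bures representation. Write $\tau := \omega_x$ for the normal state $T \mapsto \langle x, Tx \rangle$ on $\mathcal{M}$; since $x \in S(\pi,\psi)$ we have $\tau(\pi(A)) = \psi(A)$ for all $A \in \mathfrak{A}$. For positive $Z \in \mathcal{M}$, the target identity $\phi(A) = \langle Zx, \pi(A)Zx \rangle$ is exactly $\phi(A) = \tau(Z\pi(A)Z)$, so it suffices to produce such a $Z$ inside $\mathcal{M}$.

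First I would pass $\phi$ to the algebra $\pi(\mathfrak{A})$. The hypothesis $\phi \leq a\psi$ gives, for any $A$ with $\pi(A) = 0$, that $\psi(A^*A) = \| \pi(A)x \|^2 = 0$, hence $\phi(A^*A) = 0$ and so $\phi(A) = 0$ by the Cauchy--Schwarz inequality for $\phi$. Thus $\phi$ factors through $\pi$ as a positive functional on $\pi(\mathfrak{A})$ still dominated by $a\tau$.

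Next I would extend this functional to a normal positive functional $\tilde{\phi}$ on all of $\mathcal{M}$ with $\tilde{\phi} \leq a\tau$. The cleanest route is a bounded sesquilinear form: on the dense subspace $\pi(\mathfrak{A})x$ of $\mathcal{K} := \overline{\pi(\mathfrak{A})x}$ set $B(\pi(A)x, \pi(B)x) = \phi(A^*B)$. The domination $\phi \leq a\psi$ makes $B$ well defined and bounded by $a$, so $B(\xi,\eta) = \langle \xi, H\eta \rangle$ for a unique $H$ with $0 \leq H \leq a$, and a routine intertwining computation shows $H$ commutes with $\pi(\mathfrak{A})|_{\mathcal{K}}$, hence with $\mathcal{M}|_{\mathcal{K}}$. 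Extending $H$ by zero off $\mathcal{K}$, the vector functional $\tilde{\phi}(T) := \langle x, HTx \rangle = \langle Hx, Tx \rangle$ is normal on $\mathcal{M}$, agrees with $\phi$ on $\pi(\mathfrak{A})$, is positive, and satisfies $\tilde{\phi} \leq a\tau$.

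Finally, applying Sakai's Radon--Nikodym theorem to the pair $\tilde{\phi} \leq a\tau$ of normal functionals on $\mathcal{M}$ yields a positive $Z \in \mathcal{M} = \pi(\mathfrak{A})''$ with $\tilde{\phi}(T) = \tau(ZTZ)$ for all $T \in \mathcal{M}$; specializing to $T = \pi(A)$ gives $\phi(A) = \tau(Z\pi(A)Z) = \langle Zx, \pi(A)Zx \rangle$, as required. I expect the main obstacle to be precisely this last step: the elementary sesquilinear-form argument naturally produces the implementing operator $H$ in the \emph{commutant} $\pi(\mathfrak{A})'$, whereas the lemma demands $Z$ inside the algebra $\pi(\mathfrak{A})''$ and in the symmetric \emph{sandwich} form $\tau(Z \,\cdot\, Z)$. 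It is exactly the noncommutative Radon--Nikodym theorem that guarantees an implementing operator can be chosen within $\mathcal{M}$ itself, so the real work lies in verifying its hypotheses (normality of $\tilde{\phi}$ and the domination $\tilde{\phi} \leq a\tau$) rather than in any further computation.
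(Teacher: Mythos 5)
Your proposal is correct, and it reaches the paper's key tool --- Sakai's Radon--Nikodym theorem \cite[Theorem 7.3.6]{kadison1986fundamentals} --- but by a genuinely different preparatory route. The paper uses the defining property of the Bures representation a second time: since $S(\pi,\phi)\neq\emptyset$, it picks a vector $y\in S(\pi,\phi)$, extends the domination $\langle y,\pi(A^*A)y\rangle\leq a\langle x,\pi(A^*A)x\rangle$ from $\pi(\mathfrak{A})$ to $\pi(\mathfrak{A})''$ (implicitly a Kaplansky-density/weak-continuity step), and applies Sakai's theorem directly to the two vector states $\omega_y\leq a\,\omega_x$. You instead never use a vector representative of $\phi$: you factor $\phi$ through $\pi$, build the bounded sesquilinear form $B(\pi(A)x,\pi(B)x)=\phi(A^*B)$ on the cyclic subspace $\overline{\pi(\mathfrak{A})x}$, obtain an operator $0\leq H\leq a$ in the commutant, and feed the resulting normal vector functional $\tilde{\phi}=\omega_{H^{1/2}x}\leq a\,\omega_x$ into Sakai's theorem. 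Your route is longer --- it essentially re-proves the commutant Radon--Nikodym lemma before invoking the algebra-side one --- but it buys a slightly stronger statement: it works for \emph{any} representation in which $\psi$ has a vector representative, whereas the paper's argument genuinely needs $\pi$ to be a Bures representation so that $\phi$ also has one. Both arguments are sound; the well-definedness, boundedness, and intertwining checks you label as routine are indeed routine and go through exactly as you sketch.
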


\begin{proof}
Since $\pi$ is Bures representation, there is a vector $y$ in $S(\pi, \phi)$. For any $A \in \mathfrak{A}$, $\langle y, \pi(A^*A) y \rangle \leq a \langle x, \pi(A^*A)x \rangle$. So for any $X \in \pi(\mathfrak{A})''$, $\langle y, X^*X y \rangle \leq a \langle x, X^*X x \rangle$. 

By Sakai-Radon-Nykod\'ym theorem \cite[Theorem 7.3.6]{kadison1986fundamentals}, there is a positive operator $Z \in \pi(\mathfrak{A})''$ such that $\langle y, X y \rangle=\langle Zx, X Zx \rangle$ for any $X \in \pi(\mathfrak{A})''$. Thus $\phi(A)=\langle y, \pi(A)y \rangle=\langle Zx, \pi(A)Zx \rangle$ for any $A \in \mathfrak{A}$.
\end{proof}

\begin{lemma}
\label{promislow}
Let $\psi$ and $\phi$ be states of $\mathfrak{A}$, let $\pi$ be a representation of $\mathfrak{A}$, let $x$ be in $S(\pi, \psi)$ and let $Z$ be a positive operator in $\pi(\mathfrak{A})''$ such that $\phi(A)=\langle Zx, \pi(A)Zx \rangle$ for any $A \in \mathfrak{A}$. Then $F_{\pi}(\psi,\phi)=\langle x, Zx \rangle$.
\end{lemma}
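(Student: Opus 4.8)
The plan is to prove the two inequalities $F_{\pi}(\psi,\phi) \ge \langle x, Zx\rangle$ and $F_{\pi}(\psi,\phi) \le \langle x, Zx\rangle$ separately. Throughout I would write $\mathcal{M} = \pi(\mathfrak{A})''$, so that $\mathcal{M}' = \pi(\mathfrak{A})'$, and abbreviate $\omega_{\xi}(T) = \langle \xi, T\xi\rangle$. First I record that $x$ and $Zx$ are unit vectors: evaluating $\psi(A) = \langle x, \pi(A)x\rangle$ and $\phi(A)=\langle Zx, \pi(A)Zx\rangle$ at $A = I$ gives $\|x\| = \|Zx\| = 1$. Since $x \in S(\pi,\psi)$ and $Zx \in S(\pi,\phi)$ by hypothesis, and $Z$ is positive so that $\langle x, Zx\rangle \ge 0$, the definition of $F_{\pi}$ as a supremum immediately yields the lower bound $F_{\pi}(\psi,\phi) \ge |\langle x, Zx\rangle| = \langle x, Zx\rangle$. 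This disposes of the easy half.

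The substance is the upper bound: for arbitrary $u \in S(\pi,\psi)$ and $v \in S(\pi,\phi)$ I must show $|\langle u, v\rangle| \le \langle x, Zx\rangle$, after which taking the supremum finishes the proof. The first step is to pass from $\pi(\mathfrak{A})$ to $\mathcal{M}$: the vector functionals $\omega_u$ and $\omega_x$ agree on $\pi(\mathfrak{A})$ (both equal $\psi$), and being normal they agree on the $\sigma$-weak closure $\mathcal{M}$; likewise $\omega_v = \omega_{Zx}$ on $\mathcal{M}$. The second step is to realize these coincidences by operators in the commutant. I would invoke the standard fact that equality (indeed domination) of vector functionals on $\mathcal{M}$ is implemented by a contraction in $\mathcal{M}'$; concretely, the estimate $\|Tu\| = \|Tx\|$ for $T \in \mathcal{M}$ shows that $Tx \mapsto Tu$ is well defined and isometric on $\mathcal{M}x$, extending (by $0$ on the orthogonal complement) to a contraction $T' \in \mathcal{M}'$ with $u = T'x$, and similarly to a contraction $R' \in \mathcal{M}'$ with $v = R'Zx$.

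With these representations the computation is short and is the heart of the argument. Writing $B' = (T')^{*}R' \in \mathcal{M}'$, a contraction, I compute
\[ \langle u, v\rangle = \langle T'x, R'Zx\rangle = \langle x, B' Z x\rangle. \]
Because $Z \in \mathcal{M}$ is positive, its square root $Z^{1/2}$ also lies in $\mathcal{M}$ and hence commutes with $B' \in \mathcal{M}'$; thus $\langle x, B'Zx\rangle = \langle Z^{1/2}x, B' Z^{1/2}x\rangle$, and Cauchy--Schwarz together with $\|B'\| \le 1$ bounds this by $\|Z^{1/2}x\|^{2} = \langle x, Zx\rangle$. I expect the main obstacle to be the second step of the previous paragraph, namely producing the commutant contractions $T'$ and $R'$ from the coincidence of vector states; the crucial structural fact that makes the final estimate work is that $Z$ lies in the \emph{double} commutant $\pi(\mathfrak{A})''$, so that it commutes with every element of $\pi(\mathfrak{A})'$ and $Z^{1/2}$ can be split symmetrically around $B'$. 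If a citable form of the commutant Radon--Nikod\'ym statement is not at hand, I would simply include the two-line construction of $T'$ from the isometry $Tx \mapsto Tu$ sketched above, checking $T' \in \mathcal{M}'$ by intertwining on the $\mathcal{M}$-invariant subspace $\overline{\mathcal{M}x}$.
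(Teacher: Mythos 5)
Your proposal is correct and follows essentially the same route as the paper: the lower bound from $Zx \in S(\pi,\phi)$, and the upper bound by implementing the coincidence of vector states via contractions (the paper uses partial isometries $U,V$) in the commutant $\pi(\mathfrak{A})'$, then splitting $Z^{1/2}$ around the commutant operator and applying Cauchy--Schwarz. The only cosmetic difference is that you sketch the construction of the commutant contractions explicitly, whereas the paper cites their existence as standard.
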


\begin{proof}
Since $x \in S(\pi, \psi)$ and $Zx \in S(\pi, \phi)$, $\langle x, Zx \rangle \leq F_{\pi}(\psi, \phi)$. We will show that $F_{\pi}(\psi, \phi) \leq \langle x, Zx \rangle$.

Let $x' \in S(\pi, \psi)$ and $y' \in S(\pi, \phi)$. Since $x$ and $x'$ induce the same state relative to $\pi$, there is a partial isometry $U$ in $\pi(\mathfrak{A})'$ such that $x'=Ux$. Similarly there is a partial isometry $V$ in $\pi(\mathfrak{A})'$ such that $y'=VZx$. Therefore
$| \langle x',y' \rangle | = |\langle Ux,VZx \rangle | =| \langle V^*UZ^{1/2}x, Z^{1/2}x \rangle | \leq \| V^*UZ^{1/2}x \| \| Z^{1/2}x \| \leq \| Z^{1/2}x \|^2 =\langle x, Zx \rangle$. Taking the supremum over $x' \in S(\pi, \psi)$ and $y' \in S(\pi, \phi)$, $F_{\pi}(\psi, \phi) \leq \langle x, Zx \rangle$. Therefore $F_{\pi}(\psi,\phi)=\langle x, Zx \rangle$.
\end{proof}

The following lemma can be shown in a similar way to the proof of \cite[Lemma 2.3]{promislow1971kakutani}.

\begin{lemma}
\label{product_lemma}
Let $\psi_i$ and $\phi_i$ be states of $\mathfrak{A}_i$ and let $\phi'_j=(1-a)\phi_j+a\psi_j$ for $j=1,2$, where a is a real number such that $0<a<1$. Then
\begin{itemize}
\item $| F(\psi_1 \otimes \psi_2, \phi_1 \otimes \phi_2)-F(\psi_1 \otimes \psi_2, \phi'_1 \otimes \phi'_2)|<20a^{1/2}$,
\item $|F(\psi_i, \phi_i)-F(\psi_i,\phi'_i)|<10a^{1/2}$.
\end{itemize}
\end{lemma}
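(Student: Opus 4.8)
The plan is to reduce both estimates to a single perturbation bound, namely that the Bures distance $D(\phi,\phi')$ between a state $\phi$ and its convex perturbation $\phi'=(1-a)\phi+a\psi$ is of order $a^{1/2}$, and then to propagate this bound through the tensor product and through the second argument of $F$ using only the metric relation (\ref{fidelity_distance}) and the triangle inequality (\ref{triangle}).

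First I would prove the perturbation bound $F(\phi,\phi')\ge(1-a)^{1/2}$ for states $\phi,\psi$ of an arbitrary $\mathfrak A$ with $\phi'=(1-a)\phi+a\psi$. Since $F$ is a supremum over representations, it suffices to exhibit one representation and two unit vectors with large inner product. Taking the GNS representations $(\pi_\phi,\mathcal H_\phi,x_\phi)$ and $(\pi_\psi,\mathcal H_\psi,x_\psi)$, I would form $\Pi=\pi_\phi\oplus\pi_\psi$ on $\mathcal H_\phi\oplus\mathcal H_\psi$ and set $\xi=\sqrt{1-a}\,x_\phi\oplus\sqrt a\,x_\psi$ and $\zeta=x_\phi\oplus0$. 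A direct computation gives $\xi\in S(\Pi,\phi')$, $\zeta\in S(\Pi,\phi)$, and $\langle\zeta,\xi\rangle=\sqrt{1-a}$, whence $F(\phi,\phi')\ge F_\Pi(\phi,\phi')\ge\sqrt{1-a}$. Using $(1-a)^{1/2}\ge1-a$ on $(0,1)$ together with (\ref{fidelity_distance}) then yields $D(\phi,\phi')^2\le2-2(1-a)=2a$, i.e.\ $D(\phi,\phi')\le\sqrt2\,a^{1/2}$.

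Next I would record two elementary consequences. From (\ref{fidelity_distance}) and the triangle inequality (\ref{triangle}), for any states $\sigma,\alpha,\beta$ one has $|F(\sigma,\alpha)-F(\sigma,\beta)|=\tfrac12\,(D(\sigma,\alpha)+D(\sigma,\beta))\,|D(\sigma,\alpha)-D(\sigma,\beta)|\le\sqrt2\,D(\alpha,\beta)$, since each Bures distance is at most $\sqrt2$ and $|D(\sigma,\alpha)-D(\sigma,\beta)|\le D(\alpha,\beta)$. I would also establish super-multiplicativity of the fidelity: for near-optimal vectors $x_i\in S(\pi_i,\rho_i)$ and $y_i\in S(\pi_i,\sigma_i)$ in Bures representations, a routine check shows $x_1\otimes x_2\in S(\pi_1\otimes\pi_2,\rho_1\otimes\rho_2)$ and likewise for the $y_i$, so $F(\rho_1\otimes\rho_2,\sigma_1\otimes\sigma_2)\ge F(\rho_1,\sigma_1)F(\rho_2,\sigma_2)$. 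Applying this with one marginal held fixed and using $F(\chi,\chi)=1$ for every state $\chi$ yields $D(\phi_1\otimes\phi_2,\phi'_1\otimes\phi_2)\le D(\phi_1,\phi'_1)$ and $D(\phi'_1\otimes\phi_2,\phi'_1\otimes\phi'_2)\le D(\phi_2,\phi'_2)$.

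Finally I would assemble the two claims. For the second, $|F(\psi_i,\phi_i)-F(\psi_i,\phi'_i)|\le\sqrt2\,D(\phi_i,\phi'_i)\le\sqrt2\cdot\sqrt2\,a^{1/2}=2a^{1/2}<10a^{1/2}$. For the first, the triangle inequality (\ref{triangle}) and the product estimates give $D(\phi_1\otimes\phi_2,\phi'_1\otimes\phi'_2)\le D(\phi_1,\phi'_1)+D(\phi_2,\phi'_2)\le2\sqrt2\,a^{1/2}$, so with $\sigma=\psi_1\otimes\psi_2$ the conversion inequality yields $|F(\psi_1\otimes\psi_2,\phi_1\otimes\phi_2)-F(\psi_1\otimes\psi_2,\phi'_1\otimes\phi'_2)|\le\sqrt2\cdot2\sqrt2\,a^{1/2}=4a^{1/2}<20a^{1/2}$. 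The main obstacle is the perturbation bound in the first step: the sup-over-representations definition of $F$ offers no direct handle on the mixture $\phi'$, and the device of purifying it in a direct-sum representation with weighted vectors---rather than manipulating Radon--Nikod\'ym derivatives through Lemmas \ref{sakai} and \ref{promislow}---is what makes the estimate clean. Everything downstream is then bookkeeping with (\ref{fidelity_distance}) and (\ref{triangle}), and the generous constants $10$ and $20$ leave ample slack.
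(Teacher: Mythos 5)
Your proof is correct, and it is genuinely different from what the paper does: the paper gives no argument at all for this lemma, deferring to Promislow's Lemma 2.3, whose route (like the paper's own Lemmas \ref{sakai} and \ref{promislow}) runs through the domination $\phi\le(1-a)^{-1}\phi'$ and a Sakai--Radon--Nikod\'ym operator, which is where the loose constants $10$ and $20$ come from. Your replacement of that machinery by a direct purification of the mixture --- representing $\phi'=(1-a)\phi+a\psi$ by the vector $\sqrt{1-a}\,x_\phi\oplus\sqrt{a}\,x_\psi$ in $\pi_\phi\oplus\pi_\psi$ to get $F(\phi,\phi')\ge\sqrt{1-a}$ and hence $D(\phi,\phi')\le\sqrt{2}\,a^{1/2}$ --- is clean and self-contained, and the downstream steps all check: the identity $|F(\sigma,\alpha)-F(\sigma,\beta)|=\tfrac12(D(\sigma,\alpha)+D(\sigma,\beta))\,|D(\sigma,\alpha)-D(\sigma,\beta)|\le\sqrt{2}\,D(\alpha,\beta)$ follows from (\ref{fidelity_distance}), $D\le\sqrt{2}$, and the reverse triangle inequality; super-multiplicativity $F(\rho_1\otimes\rho_2,\sigma_1\otimes\sigma_2)\ge F(\rho_1,\sigma_1)F(\rho_2,\sigma_2)$ follows by tensoring near-optimal vectors in Bures representations (the suprema factor because the quantities are nonnegative and independent); and the two applications with one marginal fixed convert this into $D(\phi_1\otimes\phi_2,\phi_1'\otimes\phi_2')\le D(\phi_1,\phi_1')+D(\phi_2,\phi_2')$. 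What your approach buys is not only economy --- it avoids the W*-algebraic Radon--Nikod\'ym theorem entirely --- but also sharper bounds, $2a^{1/2}$ and $4a^{1/2}$ in place of $10a^{1/2}$ and $20a^{1/2}$, which comfortably imply the stated strict inequalities since $a>0$.
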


The following lemma can be found in \cite[Proposition IV.4.13]{takesaki2002theory}. 

\begin{lemma}
\label{minimal}
Let $\pi_1$ and $\pi_2$ be representations of $\mathfrak{A}_1$ and $\mathfrak{A}_2$, respectively. Then $Z_1 \otimes Z_2 \in \{ (\pi_1 \otimes \pi_2)(\mathfrak{A}_1 \otimes \mathfrak{A}_2) \}''$ for any $Z_1 \in \pi_1(\mathfrak{A}_1)''$ and $Z_2 \in \pi_2(\mathfrak{A}_2)''$.
\end{lemma}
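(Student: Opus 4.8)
The statement asserts that the von Neumann algebra $N := \{(\pi_1 \otimes \pi_2)(\mathfrak{A}_1 \otimes \mathfrak{A}_2)\}''$ on $\mathcal{H}_1 \otimes \mathcal{H}_2$ contains every elementary tensor $Z_1 \otimes Z_2$ with $Z_i \in \pi_i(\mathfrak{A}_i)''$. My plan is to approximate $Z_1 \otimes Z_2$ in the strong operator topology by operators already visibly lying in $N$, and then invoke the fact that $N$, being a double commutant, is strongly closed. First I would record the routine inclusion $\pi_1(\mathfrak{A}_1) \odot \pi_2(\mathfrak{A}_2) \subseteq N$: each elementary tensor satisfies $\pi_1(A_1) \otimes \pi_2(A_2) = (\pi_1 \otimes \pi_2)(A_1 \otimes A_2) \in (\pi_1 \otimes \pi_2)(\mathfrak{A}_1 \otimes \mathfrak{A}_2) \subseteq N$, so in particular $S \otimes T \in N$ whenever $S \in \pi_1(\mathfrak{A}_1)$ and $T \in \pi_2(\mathfrak{A}_2)$.

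Next I would approximate the factors. Since $\pi_i(\mathfrak{A}_i)$ is a unital $*$-subalgebra of $\mathbb{B}(\mathcal{H}_i)$ whose strong-operator closure is $\pi_i(\mathfrak{A}_i)''$ by the double commutant theorem, the Kaplansky density theorem furnishes bounded nets $(S_\lambda) \subseteq \pi_1(\mathfrak{A}_1)$ and $(T_\mu) \subseteq \pi_2(\mathfrak{A}_2)$ with $\|S_\lambda\| \le \|Z_1\|$ and $\|T_\mu\| \le \|Z_2\|$ such that $S_\lambda \to Z_1$ and $T_\mu \to Z_2$ strongly. The uniform norm bounds that Kaplansky provides, as opposed to mere strong density, are the load-bearing input of the argument.

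Then I would show $S_\lambda \otimes T_\mu \to Z_1 \otimes Z_2$ strongly. Factoring $S_\lambda \otimes T_\mu = (S_\lambda \otimes I)(I \otimes T_\mu)$, one checks that $A \mapsto A \otimes I$ and $B \mapsto I \otimes B$ send bounded strongly convergent nets to strongly convergent ones (verify on finite sums $\sum_i \xi_i \otimes \eta_i$, then extend by density using the uniform bound). For the double net, fix $\xi$ and insert $(S_\lambda \otimes Z_2)\xi$: the difference splits as $(S_\lambda \otimes I)\bigl((I \otimes T_\mu)\xi - (I \otimes Z_2)\xi\bigr)$, whose norm is at most $\|Z_1\|\,\|(I \otimes (T_\mu - Z_2))\xi\| \to 0$ uniformly in $\lambda$, plus $\bigl((S_\lambda - Z_1) \otimes I\bigr)(I \otimes Z_2)\xi \to 0$, which is independent of $\mu$. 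Hence the two errors can be driven below $\epsilon/2$ separately, giving strong convergence along the product directed set. Since each $S_\lambda \otimes T_\mu \in N$ and $N$ is strongly closed, the limit $Z_1 \otimes Z_2$ lies in $N$, which is the claim.

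The main obstacle is exactly this joint strong convergence: multiplication is not jointly strongly continuous in general, so one must exploit the uniform norm control from Kaplansky to estimate the two error terms independently. An alternative but less self-contained route is Tomita's commutation theorem: from unitality of the $\pi_i$ one obtains $N' \subseteq \bigl(\pi_1(\mathfrak{A}_1)' \bar{\otimes} \mathbb{B}(\mathcal{H}_2)\bigr) \cap \bigl(\mathbb{B}(\mathcal{H}_1) \bar{\otimes} \pi_2(\mathfrak{A}_2)'\bigr) = \pi_1(\mathfrak{A}_1)' \bar{\otimes} \pi_2(\mathfrak{A}_2)'$, and taking commutants yields $N \supseteq \pi_1(\mathfrak{A}_1)'' \bar{\otimes} \pi_2(\mathfrak{A}_2)'' \ni Z_1 \otimes Z_2$. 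I would favor the Kaplansky argument, since it needs only the double commutant and density theorems rather than the full commutation theorem.
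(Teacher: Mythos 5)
Your proof is correct, but note that the paper does not prove this lemma at all: it simply cites Takesaki, Proposition IV.4.13, which is the standard fact (ultimately a form of the commutation theorem for tensor products) that the von Neumann tensor product $\pi_1(\mathfrak{A}_1)''\,\bar{\otimes}\,\pi_2(\mathfrak{A}_2)''$ is contained in the double commutant of the spatial tensor product of the represented algebras. So your Kaplansky-density argument is a genuinely different, self-contained route, essentially reproving the needed fragment of that citation from first principles; what it buys is independence from the commutation theorem, at the cost of the bookkeeping with bounded nets. Two remarks. First, your double-net argument is sound (the uniform bound $\|S_\lambda\|\le\|Z_1\|$ does make the two error terms controllable independently), but it is more work than necessary: since $N$ is an algebra, it suffices to show $Z_1\otimes I\in N$ and $I\otimes Z_2\in N$ \emph{separately} --- each is a strong limit of a single bounded net $S_\lambda\otimes I=(\pi_1\otimes\pi_2)(A_\lambda\otimes 1)$, resp.\ $I\otimes T_\mu$ --- and then multiply inside $N$ to get $Z_1\otimes Z_2=(Z_1\otimes I)(I\otimes Z_2)$; this eliminates the joint-continuity issue you correctly identify as the main obstacle. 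Second, your (correct) reliance on unitality of the $\pi_i$ deserves emphasis: the identification of the strong closure of $\pi_i(\mathfrak{A}_i)$ with $\pi_i(\mathfrak{A}_i)''$ via the double commutant theorem requires nondegeneracy, and the lemma can actually fail for degenerate representations (e.g., if $\pi_1(1)=E\ne I$ and $\pi_2(1)=F\ne I$, then $I-E\in\pi_1(\mathfrak{A}_1)''$ but $(I-E)\otimes I\notin N$, since every element of $N$ acts as a scalar on the range of $I-E\otimes F$). This hypothesis is harmless in the paper, where the representations in play (GNS and Bures representations of unital algebras) are nondegenerate, but a complete write-up should state it.
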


\begin{lemma}
\label{product}
Let $\psi_j$ and $\phi_j$ be states of $\mathfrak{A}_j$, and let $\pi_j$ be Bures representation of $\mathfrak{A}_j$ for $j=1,2$. Then $F_{\pi_1 \otimes \pi_2}(\psi_1 \otimes \psi_2, \phi_1 \otimes \phi_2) =F(\psi_1, \phi_1)F(\psi_2, \phi_2)$.
\end{lemma}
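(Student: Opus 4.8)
The plan is to reduce to the case where one state dominates the other, settle that case by factorizing an implementing density operator across the tensor product, and then remove the domination hypothesis by a perturbation-and-limit argument.

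First I would treat the dominated case: assume $\phi_j \leq a_j \psi_j$ for some $a_j > 0$, $j=1,2$. Since $\pi_j$ is Bures, choose $x_j \in S(\pi_j, \psi_j)$; by Lemma \ref{sakai} there is a positive $Z_j \in \pi_j(\mathfrak{A}_j)''$ with $\phi_j(A)=\langle Z_j x_j, \pi_j(A) Z_j x_j \rangle$ for all $A \in \mathfrak{A}_j$. The product vector $x_1 \otimes x_2$ lies in $S(\pi_1 \otimes \pi_2, \psi_1 \otimes \psi_2)$, while $Z_1 \otimes Z_2$ is a positive operator lying in $\{ (\pi_1 \otimes \pi_2)(\mathfrak{A}_1 \otimes \mathfrak{A}_2) \}''$ by Lemma \ref{minimal}, and $(Z_1 \otimes Z_2)(x_1 \otimes x_2)$ implements $\phi_1 \otimes \phi_2$ (this is checked on product elements $A_1 \otimes A_2$ and extended by linearity and continuity). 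Applying Lemma \ref{promislow} in the representation $\pi_1 \otimes \pi_2$ then gives $F_{\pi_1 \otimes \pi_2}(\psi_1 \otimes \psi_2, \phi_1 \otimes \phi_2) = \langle x_1 \otimes x_2, (Z_1 \otimes Z_2)(x_1 \otimes x_2) \rangle = \langle x_1, Z_1 x_1 \rangle \langle x_2, Z_2 x_2 \rangle$, while the same lemma applied to each factor, together with the Bures property of $\pi_j$, gives $\langle x_j, Z_j x_j \rangle = F_{\pi_j}(\psi_j, \phi_j) = F(\psi_j, \phi_j)$. This settles the dominated case.

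For the general case I would perturb: set $\phi'_j = (1-a)\phi_j + a\psi_j$ with $0 < a < 1$. Then $\phi'_j \geq a\psi_j$, so $\psi_j$ is dominated by $\phi'_j$; applying the dominated case with the roles of $\psi_j$ and $\phi'_j$ interchanged, and using that both $F$ and $F_\pi$ are symmetric in their two arguments (immediate from the definitions), yields $F_{\pi_1 \otimes \pi_2}(\psi_1 \otimes \psi_2, \phi'_1 \otimes \phi'_2) = F(\psi_1, \phi'_1) F(\psi_2, \phi'_2)$ for every $a \in (0,1)$. Letting $a \to 0$, the right-hand side converges to $F(\psi_1, \phi_1) F(\psi_2, \phi_2)$ by the second estimate of Lemma \ref{product_lemma}, and it remains only to show that the left-hand side converges to $F_{\pi_1 \otimes \pi_2}(\psi_1 \otimes \psi_2, \phi_1 \otimes \phi_2)$.

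The main obstacle is precisely this last convergence. Lemma \ref{product_lemma} as stated controls the \emph{global} fidelity $F$ of the product states, whereas what I need is continuity of the representation-specific quantity $F_{\pi_1 \otimes \pi_2}(\psi_1 \otimes \psi_2, \cdot)$ along $\phi'_1 \otimes \phi'_2 \to \phi_1 \otimes \phi_2$; since the tensor product of Bures representations need not itself be a Bures representation of $\mathfrak{A}_1 \otimes \mathfrak{A}_2$, these quantities are a priori distinct and the stated global estimate does not transfer automatically. I would resolve this by noting that the Promislow-type estimate underlying Lemma \ref{product_lemma} is carried out inside a single fixed representation, producing implementing vectors in the given Hilbert space directly from the convex decomposition $\phi'_j = (1-a)\phi_j + a\psi_j$; the same argument therefore yields $|F_{\pi_1 \otimes \pi_2}(\psi_1 \otimes \psi_2, \phi_1 \otimes \phi_2) - F_{\pi_1 \otimes \pi_2}(\psi_1 \otimes \psi_2, \phi'_1 \otimes \phi'_2)| < 20 a^{1/2}$, using that $\pi_1 \otimes \pi_2$ has all the relevant $S$-sets nonempty. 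With this representation-specific estimate the left-hand side converges as $a \to 0$, the two limits coincide, and we conclude $F_{\pi_1 \otimes \pi_2}(\psi_1 \otimes \psi_2, \phi_1 \otimes \phi_2) = F(\psi_1, \phi_1) F(\psi_2, \phi_2)$.
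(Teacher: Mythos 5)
Your proof follows essentially the same route as the paper's: perturb to $\phi_j' = (1-a)\phi_j + a\psi_j$ so that Lemma \ref{sakai} applies, compute the fidelities of the perturbed states via Lemmas \ref{minimal} and \ref{promislow} to obtain the product formula, and then let $a \to 0$ using Lemma \ref{product_lemma}. The only real difference is that you explicitly flag and patch the application of Lemma \ref{product_lemma} to the representation-specific quantity $F_{\pi_1 \otimes \pi_2}$ rather than to $F$ --- a point the paper's proof passes over silently --- so your write-up is, if anything, more careful at that step.
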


\begin{proof}
For any $0 < a < 1$, let $\phi_j'$ be defined as $\phi_j' := (1-a)\phi_j+a \psi_j$. Then $\psi_j(A^*A) \leq (1/a)\phi_j'(A^*A)$ for all $A_j \in \mathfrak{A}_j$. Let $x_j \in S(\pi_j,\phi_j')$. By Lemma \ref{sakai}, there exists a positive operator $Z_j \in \pi_j(\mathfrak{A}_j)''$ such that $\psi_j(A_j)=\langle Z_j x_j, \pi_j(A_j)Z_j x_j \rangle$ for any $A_j \in \mathfrak{A}_j$. Then $x_1 \otimes x_2 \in S(\pi_1 \otimes \pi_2, \phi_1' \otimes \phi_2')$ and $(Z_1 \otimes Z_2)(x_1 \otimes x_2) \in S(\pi_1 \otimes \pi_2, \psi_1 \otimes \psi_2)$. By Lemma \ref{minimal}, $Z_1 \otimes Z_2 \in \{ (\pi_1 \otimes \pi_2)(\mathfrak{A}_1 \otimes \mathfrak{A}_2) \}''$. Thus $F_{\pi_1 \otimes \pi_2}(\psi_1 \otimes \psi_2, \phi_1' \otimes \phi_2') = \langle x_1 \otimes x_2, (Z_1 \otimes Z_2)(x_1 \otimes x_2) \rangle = \langle x_1, Z_1 x_1 \rangle \langle x_2, Z_2 x_2 \rangle$ and $F(\psi_j, \phi_j')=\langle x_j, Z_jx_j \rangle$ by Lemma \ref{promislow}. Therefore
\begin{equation}
\label{e1}
F_{\pi_1 \otimes \pi_2}(\psi_1 \otimes \psi_2, \phi_1' \otimes \phi_2')=F(\psi_1, \phi_1')F(\psi_2, \phi_2')
\end{equation}

By Lemma \ref{product_lemma}, $|F_{\pi_1 \otimes \pi_2}(\psi_1 \otimes \psi_2, \phi_1 \otimes \phi_2) - F_{\pi_1 \otimes \pi_2}(\psi_1 \otimes \psi_2, \phi_1' \otimes \phi_2')| \leq 20 a^{1/2}$ and $|F(\psi_j, \phi_j')-F(\psi_j,\phi_j)|<10a^{1/2}$. Using Equation (\ref{e1}),
\[ \begin{split}
&| F_{\pi_1 \otimes \pi_2}(\psi_1 \otimes \psi_2, \phi_1 \otimes \phi_2)-F(\psi_1,\phi_1)F(\psi_2,\phi_2)| \\
&=| F_{\pi_1 \otimes \pi_2}(\psi_1 \otimes \psi_2, \phi_1 \otimes \phi_2)- F_{\pi_1 \otimes \pi_2}(\psi_1 \otimes \psi_2, \phi_1' \otimes \phi_2') \\
& \ \ \ \ \ \ \ +F(\psi_1, \phi_1')F(\psi_2, \phi_2')-F(\psi_1,\phi_1)F(\psi_2,\phi_2)| \\
&\leq | F_{\pi_1 \otimes \pi_2}(\psi_1 \otimes \psi_2, \phi_1 \otimes \phi_2)- F_{\pi_1 \otimes \pi_2}(\psi_1 \otimes \psi_2, \phi_1' \otimes \phi_2') | \\
& \ \ \ \ \ \ \ + |F(\psi_1, \phi_1')F(\psi_2, \phi_2')-F(\psi_1,\phi_1')F(\psi_2,\phi_2)| \\
& \ \ \ \ \ \ \ +|F(\psi_1, \phi_1')F(\psi_2, \phi_2)-F(\psi_1,\phi_1)F(\psi_2,\phi_2)| \\
&\leq 20a^{1/2}+10a^{1/2}+10a^{1/2}=40a^{1/2}.
\end{split} \]
Since $a$ is an arbitrary real number such that $0 < a < 1$, $F_{\pi_1 \otimes \pi_2}(\psi_1 \otimes \psi_1, \phi_1 \otimes \phi_2)=F(\psi_1, \phi_1)F(\psi_2, \phi_2)$. 
\end{proof}

\begin{proof}[Proof of Theorem \ref{imperfect_theorem}]

Let $T$ be a universal $\epsilon$-imperfect cloning operation of $\mathfrak{A}_1 \vee \mathfrak{A}_2$. The proof proceeds by contradiction.

Suppose that $\mathfrak{A}_1$ is not abelian. By Proposition \ref{nonabelian}, there exist pure states $\psi_1$ and $\phi_1$ such that $0<F(\psi_1, \phi_1)<1 - 4\epsilon$ since $\epsilon < 1/4$. By Equation (\ref{e0.01}),
\begin{equation}
\label{theorem0.1}
0<F(\psi_2, \phi_2)<1 - 4\epsilon,
\end{equation}
where $\psi_2$ and $\phi_2$ are isomorphic to $\psi_1$ and $\phi_1$, respectively. By the definition of $T$, there are pure states $\bar{\psi_2}$ and $\bar{\phi_2}$ of $\mathfrak{A}_2$ and a state $\sigma_2$ of $\mathfrak{A}_2$ such that $T^*(\psi_1 \otimes \sigma_2)=\psi_1 \otimes \bar{\psi_2}$, $T^*(\phi_1 \otimes \sigma_2)=\phi_1 \otimes \bar{\phi_2}$, $F(\psi_2, \bar{\psi}_2) \geq 1- \epsilon$ and $F(\phi_2, \bar{\phi}_2) \geq 1 - \epsilon$. In this case,
\begin{equation}
\label{theorem0.3}
D(\psi_2, \bar{\psi}_2) \leq \sqrt{2 \epsilon}, \ \ \ D(\phi_2, \bar{\phi}_2) \leq \sqrt{2 \epsilon}
\end{equation}
by Equation (\ref{fidelity_distance}).

Since $\psi_1 \otimes \bar{\psi}_2$ is a pure state of $\mathfrak{A}_1 \otimes \mathfrak{A}_2$,
\begin{equation}
\label{theorem_01}
F(\psi_1 \otimes \bar{\psi_2}, \phi_1 \otimes \bar{\phi_2}) =F_{\pi_1 \otimes \pi_2}(\psi_1 \otimes \bar{\psi_2}, \phi_1 \otimes \bar{\phi_2})=F(\psi_1,\phi_1)F(\bar{\psi_2},\bar{\phi_2})
\end{equation}
\begin{equation}
\label{theorem_02}
F(\psi_1, \phi_1)=F(\psi_1, \phi_1)F(\sigma_2,\sigma_2)=F_{\pi_1 \otimes \pi_2}(\psi_1 \otimes \sigma_2, \phi_2 \otimes \sigma_2),
\end{equation}
by Lemma \ref{Bures_pure} and Lemma \ref{product}.

Using Equations (\ref{monotone}), (\ref{theorem_01}) and (\ref{theorem_02}),
\[ \begin{split}
F(\psi_1, \phi_1)&=F_{\pi_1 \otimes \pi_2}(\psi_1 \otimes \sigma_2, \phi_1 \otimes \sigma_2) \\
&\leq F(\psi_1 \otimes \sigma_2, \phi_1 \otimes \sigma_2) \\
&\leq F(T^*(\psi_1 \otimes \sigma_2), T^*(\phi_1 \otimes \sigma_2)) \\
&=F(\psi_1 \otimes \bar{\psi_2}, \phi_1 \otimes \bar{\phi_2}) \\
&=F(\psi_1,\phi_1)F(\bar{\psi_2},\bar{\phi_2}).
\end{split} \]
Since $F(\psi_1, \phi_1) \neq 0$, $F(\bar{\psi}_2, \bar{\phi}_2)=1$. It implies $\bar{\psi}_2=\bar{\phi}_2$. By Equation (\ref{triangle}),
\begin{equation}
\label{theorem03}
D(\psi_2,\phi_2) \leq D(\psi_2, \bar{\psi}_2) + D(\bar{\psi}_2, \phi_2)=D(\psi_2, \bar{\psi}_2) + D(\bar{\phi}_2, \phi_2). 
\end{equation}
Inequalities (\ref{theorem0.3}) and (\ref{theorem03}) imply that $D(\psi_2, \phi_2) \leq 2\sqrt{2 \epsilon}$. By Equation (\ref{fidelity_distance}), \begin{equation}
\label{theorem0.2}
F(\psi_2,\phi_2) \geq 1 - 4\epsilon.
\end{equation}
It contradicts with Inequality (\ref{theorem0.1}). Therefore $\mathfrak{A}_1$ is abelian.
\end{proof}

\section{Summary}

Clifton, Bub and Halvorson \cite{clifton2003characterizing} defined a perfect cloning operation in the C*-algebraic framework, and showed that an individual system's algebra of observables is abelian if and only if there is a universal perfect cloning operation (Theorem \ref{CBH}). 
Thus there is no universal perfect cloning operation in algebraic quantum theory. On the other hand, Bu{\v{z}}ek and Hillery \cite{buvzek1996quantum} showed that there exists a universal imperfect cloning operation which tolerates a finite loss of fidelity in the cloned state in quantum mechanics. In this operation, the original and the cloned states are entangled.

In the present paper, we examined the case where the original and the cloned states are not entangled, and a finite loss of fidelity in the cloned state is tolerated. In Definition \ref{imperfect_definition} we defined a universal $\epsilon$-imperfect cloning operation. This operation takes $\psi_1$ as an input, and returns $\bar{\psi}_2$ as an output. The original state $\psi_1$ and the cloned state $\bar{\psi}_2$ are not entangled, and the loss of fidelity is less than or equal to $\epsilon$. In Corollary \ref{summary} it is shown that $\mathfrak{A}_1$ is abelian if and only if there is a universal $\epsilon$-imperfect cloning operation in the case where the loss of fidelity is less than $1/4$. Therefore in this case no universal $\epsilon$-imperfect cloning operation is possible in algebraic quantum theory.




\section*{Acknowledgements}
The author wishes to thank Izumi Ojima and Yutaka Shikano for helpful comments on an earlier draft.
The author is supported by the JSPS KAKENHI, No.23701009 and the John Templeton Foundation Grant ID 35771.

\end{document}